\newcommand\dela[1]{}
\theoremstyle{plain}
\newtheorem{theorem}{Theorem}[section]
\theoremstyle{remark}
\newtheorem{remark}[theorem]{Remark}
\newtheorem{question}[theorem]{Question}
\theoremstyle{plain}
\newtheorem{corollary}[theorem]{Corollary}
\newtheorem{lemma}[theorem]{Lemma}
\newtheorem{proposition}[theorem]{Proposition}
\newtheorem{definition}[theorem]{Definition}
\numberwithin{equation}{section}
\def\todown{\searrow}
\newcommand{\lb}{\langle}
\newcommand{\rb}{\rangle}
\newcommand\diver{\operatorname{div}}
\newcommand\curl{\operatorname{curl}}
\newcommand{\eps}{\varepsilon}
\newcommand{\al}{\alpha}
\def\Nat{{\mathbb N}}
\def\Rnu{{\mathbb R}}
\def\Rd{{\Rnu^d~}}
\def\liml{\lim\limits}
\def\liminfl{\liminf\limits}
\def\limsupl{\limsup\limits}
\def\suml{\sum\limits}
\def\supl{\sup\limits}
\def\intl{\int\limits}
\def\supl{\mathop{\sup}\limits}
\newcommand{\blue}{\color{blue}}
\begin{document}
\title[Stochastic Burgers equation]{Multidimensional stochastic Burgers equation}

\author[Z. Brze{\'z}niak, B. Goldys, M. Neklyudov]{ Zdzis{\l}aw Brze{\'z}niak, Ben Goldys and Misha Neklyudov}

\address{
School of Mathematics and Statistics,
University of NSW, Sydney,  Australia\\
Department of Mathematics, University of York, Heslington,
UK\\
Department of Mathematics, University of York, Heslington,
UK
}

\date{\today}

\begin{abstract}
We consider multidimensional stochastic
Burgers equation on the torus $\mathbb{T}^d$ and the whole space
$\Rd$. In both cases we show that for positive viscosity $\nu>0$ there exists a unique
strong global solution in $L^p$ for $p>d$. In the case of torus we also
establish a uniform in $\nu$ a priori estimate and consider a limit $\nu\todown 0$
for potential solutions. In the case of $\Rd$ uniform with respect to $\nu$ a priori estimate
established if a Beale-Kato-Majda type condition is satisfied.
\end{abstract}
\maketitle
\section{Introduction}
The aim of this paper is to study the existence and the uniqueness of solutions to the multidimensional stochastic Burgers equation
of the following form:
\begin{equation}\label{eq1}
\left\{\begin{array}{ll}
\frac{\partial u}{\partial t}=\nu\Delta u+u\cdot\nabla u+f+\xi,&t>0,\,\, x\in\mathcal O,\\
u(0,x)=u_0(x),&x\in\mathcal O,
\end{array}\right.
\end{equation}
where either $\mathcal O=\Rd$ or $\mathcal O=\mathbb T^d$. In the equation above $f$ is a deterministic force and $\xi$ is a multidimensional noise, white in time and correlated in space. We do not assume that $u_0$, $f$ and $\xi$ are of gradient form. The parameter $\nu>0$ is known as viscosity. In this paper we will also study the limit of solutions to \eqref{eq1} when $\nu\to 0$.

Equation \eqref{eq1} has been proposed by Burgers \cite{burgers} as a toy model for
turbulence, see also Weinan \cite{Weinan-2000}. Later, numerous applications were found in Astrophysics and Statistical Physics. For an interesting review of applications and problems related to equation \eqref{eq1}, see \cite{bec} and references therein. The Burgers equation with data of non-potential type arises in many areas of Physics, including gas dynamics and the theory of inelastic granular media, see for example \cite{naim}. The theory of equation \eqref{eq1} in the non-potential case is largely a terra incognita, see the review \cite{bec}, where a variety of open problems can be found. This paper and the preceding work \cite{Goldys-Neklyudov} by the second and third named authours are the first steps towards answering some of these questions.

One dimensional stochastic Burgers equation  has been fairly well studied. Da Prato, Debussche, Temam \cite{DaPratoTemam-1994}, see also Bertini, Cancrini and Jona-Lasinio \cite{BertiniJona-Lasinio-1994}, showed the existence of a unique global solution for one dimensional Burgers equation with additive noise. The existence and uniqueness results have been extended to the case of multiplicative noise by Da Prato, Gatarek \cite{DaPratoGatarek-1995} and Gy\"{o}ngy, Nualart \cite{GN1999}. \begin{comment}The existence and uniqueness of invariant measure for zero viscosity limit of stochastic Burgers equation with additive noise has been shown in \cite{Sinai2000}.\end{comment}

Multidimensional Burgers equation has been studied much less comprehensively. Kiselev, Ladyzhenskaya \cite{[Kis+Lad_1957]} proved the existence and uniqueness of a global solution to the deterministic Burgers equation a bounded domain $\mathcal O$ in the class of functions $L^{\infty}(0,T;L^{\infty}(\mathcal{O}))\cap L^2(0,T; H^{1,2}_0(\mathcal{O}))$. The main idea of their proof is to apply maximum principle to deduce a priori estimates similar to the a priori estimates  for the Navier-Stokes equation. Ton \cite{BuiAnTon1975} established convergence of solutions on small time interval when we take the limit $\nu\to 0$ and when the initial condition is zero.
\par\noindent
The assumption that the initial condition and force have gradient form considerably simplifies analysis of the Burgers equation. It is well known that in this case one can apply the
Hopf-Cole transformation ((\cite{Hopf1950}, \cite{Cole1951})) to reduce the multidimensional Burgers equation either to the heat equation or to
the Hamilton-Jacobi equation, see for example \cite{DirrSouganidis-2005}.
The number of works on the Hopf-Cole transformation is huge and we will not try to list them all here. We only mention
Dermoune \cite{Dermoune-1999}, where the Hopf-Cole transformation is used to show the existence of solution to the stochastic multidimensional Burgers equation with additive noise.
Khanin et al \cite{GomesIturriaga-2000}  proved the existence of the so called quasi stationary solution by
the Hopf-Cole transformation and Stochastic Lax formula, thus partially extending to many dimensions an important paper \cite{sinai} by Sinai. This  approach however has certain intrinsic problems. In particular, it seems difficult to find an a priori estimate for the
solution without additional assumptions on the initial condition as in Dermoune \cite{Dermoune-1999} p. 303, Theorem 4.2.
Hence, it is difficult to characterize functional spaces
in which solution lies or to characterize quasi stationary solution, see Definition 1 in \cite{GomesIturriaga-2000}.
\par
In this paper we consider multidimensional Burgers equation \eqref{eq1} in $L^{p}(\mathcal{O},\mathbb{R}^d)$, $p>d$, in the domain $\mathcal O$ being either a torus $\mathbb T^d$ or the full space $\mathbb R^d$. In both cases we prove, in Theorems 4.1 and 4.3 respectively, the existence and uniqueness of solutions for every initial condition $u_0\in L^{p}(\mathcal{O},\mathbb{R}^d)$ and establish a priori estimates. In particular, Theorem 4.3 holds in the case $\mathcal O=\mathbb R^d$ and $\xi=0$ thus improving our previous results from \cite{Goldys-Neklyudov}. In the case of $\mathcal O=\mathbb R^d$ however, the a priori estimates are nonuniform with respect to $\nu$. Theorems 4.1 and 4.3 extend all aforementioned results on the existence and uniqueness of solutions to \eqref{eq1} to the stochastic case.\\
In Theorem 4.6 we provide a general sufficient condition under which uniform with respect to $\nu$ estimates can be derived on $\mathbb R^d$ as well. It is interesting to note that this condition can be viewed as a modification and an extension to the stochastic case of the famous Beale-Kato-Majda condition assuring the existence of global solutions to the deterministic Navier-Stokes equation.\\
Finally, we apply our results to the gradient case. It is easy to see that in the gradient case  the Beale-Kato-Majda condition holds and therefore the existence and uniqueness of global solutions follows from our general results. Morevoer, we obtain the estimates uniform in $\nu$ on the torus and on the whole space and as a consequence we show that there exists a vanishing viscosity limit for equation \eqref{eq1} for every $u_0\in L^p\left(\mathcal O,\mathbb R^d\right)$.
\par
In our proofs we extend the approach of \cite{Goldys-Neklyudov}, where the deterministic case $\xi=0$ was studied. We start with a proof of the local existence and uniqueness of mild solutions in $L^{p}(\mathcal{O},\mathbb{R}^d)$, $p\geq d$, following the argument of Weissler \cite{Weissler-1980}. Then we find a priori estimates using the Maximum Principle and then show that the local solution is in fact global. We note here that this method was applied earlier to the deterministic Burgers equation by Kiselev, Ladyzhenskaya \cite{[Kis+Lad_1957]}.
% In
%the irrotational case it is possible to employ methods borrowed from the theory of
%Hamilton-Jacobi equations, see \cite{DirrSouganidis-2005}.
%In the
%general case Kiselev and Ladyzhenskaya \cite{[Kis+Lad_1957]} used the maximum principle to
%establish the existence result for the deterministic Burgers equation in bounded domain.
%there is no framework established yet. Partial
%results in this direction can be found in \cite{Goldys-Neklyudov}.
%It was shown in \cite{Goldys-Neklyudov} that there exists global
%solution of deterministic Burgers equation on the torus in
%Lebesque spaces $L^p$ (for $p>d$, $d$-- dimension of the problem).
%Goldys and Neklyudov \cite{Goldys-Neklyudov} showed existence of global solution in the whole space under
%Beale-Kato-Majda assumption and certain quite mild conditions on
%growth at infinity. This approach seems to be interesting because
%a priori estimate they derive is uniform with respect to the zero viscosity $\nu$ limit.
\par\medskip\noindent
\textbf{Acknowledgement}
We would like to thank Y. Sinai for pointing out reference \cite{[Kis+Lad_1957]}.

\section{Formulation of the problem and some auxiliary facts}
Let
$\mathcal{O}$ be either $\mathbb{T}^d$ or $\mathbb{R}^d$. In both cases we will use the same notation $\Delta$ for the generator of the heat semigroup $\left(S_t\right)$ in $\mathbb{L}^p(\mathcal{O}):=L^p\left(\mathcal O,\Rd\right)$ for $p\in(1,\infty)$. Let us recall that
\[
\mathrm{dom}_{\mathbb{L}^p(\mathcal{O})}(\Delta)=H^{2,p}\left(\Rd,\Rd\right)\quad\mbox{if}\quad \mathcal O=\Rd
\]
and
 \[
 \mathrm{dom}_{\mathbb{L}^p(\mathcal{O})}(\Delta)=H^{2,p}_{per}\left(\mathbb T^d,\Rd\right)\quad\mbox{if}\quad \mathcal O=\mathbb{T}^d.
 \]
We will use the standard notation $\mathbb H^{n,p}(\mathcal{O})=H^{n,p}\left(\mathcal{O},\Rd\right)$ for the Sobolev spaces of $\Rd$-valued functions with the norm
\[|f|_{n,p}=|(I-\triangle)^{\frac{n}{2}}f|_{\mathbb{L}^p(\mathcal{O})}.\]
The dual space space of $\mathbb H^{n,p}(\mathcal{O})$ will be denoted by $\mathbb H^{-n,q}(\mathcal{O})$ with $q=\frac{p}{p-1}$.
\par
\begin{comment}
$H$ is a Hilbert
space, $\mathcal{L}_2(H,X)$ is a space of
Hilbert-Schmidt operators from $H$ to $X$,
$S'(\mathcal{O},\mathbb{R}^d)$-Schwartz distributions with values
in $\mathbb{R}^d$,
\end{comment}
Let $(\Omega,\mathcal{F},(\mathcal{F}_t)_{t\geq
0},\mathbb{P})$ be a probability space with the filtration satisfying the usual conditions. We will denote by $M^p([0,T],\mathbb H^{n,p}(\mathcal{O}))$ the space of $\mathbb H^{n,p}(\mathcal{O})$-valued progressively measurable processes endowed with the norm
\[\|u\|_{T,n,p}=\left(\mathbb{E}\intl_0^{T}|u(s,\cdot)|_{\mathbb H^{n,p}(\mathcal{O})}^p\,ds\right)^{\frac{1}{p}}.\]
%We omit $X$ below if $X=\Rd$.
Let $(W_t)_{t\geq 0}$ be a standard cylindrical Wiener process on separable Hilbert space $H$ defined on $(\Omega,\mathcal{F},(\mathcal{F}_t)_{t\geq
0},\mathbb{P})$. Let us recall that for $p\ge 2$ the space $\mathbb L^p(\mathcal{O})$ is an $M$-type 2 Banach space and therefore the stochastic integration theory is developed \cite{Brzezniak-1997} can be applied in this space. In order to give a meaning to equation \eqref{eq1} we will consider first its linearized version
\begin{equation}\label{eq2}
\left\{\begin{array}{ll}
\frac{\partial z}{\partial t}=\nu\Delta z+f+\xi,&t>0,\,\, x\in\mathcal O,\\
z(0,x)=0,&x\in\mathcal O.
\end{array}\right.
\end{equation}
that will be understood as a stochastic evolution equation in the space $\mathbb L^p(\mathcal O)$:
\begin{equation}
dz=(\nu\triangle
z+f)\,dt+g\,dW_t,\;\;z(0)=0.\label{eqn:StrongOrnsteinUhlenbeck-1}
\end{equation}
To define solution to equation \eqref{eqn:StrongOrnsteinUhlenbeck-1}, let us recall that for a Banach space $X$ and separable Hilbert space $H$, we denote by $\gamma(H,X)$ the Banach space of $\gamma$--radonifying operators from $H$ to $X$ (see definition 3.7 of \cite{JVNeerven2009}). If $g\in M^p\left([0,T];\gamma\left(H,\mathbb L^p(\mathcal O)\right)\right)$ and $f\in M^p\left([0,T];\mathbb H^{-1,p}(\mathcal O)\right)$ then solution to \eqref{eqn:StrongOrnsteinUhlenbeck-1} is given by the formula
\[z(t)=\int_0^tS_{t-s}^\nu f(s)ds+\int_0^tS_{t-s}^\nu g(s)dW(s).\]
The regularity properties of Ornstein-Uhlenbeck process were
studied in a vast number of articles, see for instance
\cite{Brzezniak-1997} and references therein. The following
theorem has been proved by Brzezniak \cite{Brzezniak-1997} (Corollary 3.5) and Krylov \cite{[Krylov-1999]}
(Theorem 4.10 (i) and Theorem 7.2(i), chapter 5) for the case of
whole space. The case of torus can be proved similarly.
\begin{theorem}\label{thm:OrnsteinUhlenbeckReg}
Assume $n\in\mathbb{Z}$ and $f\in M^p([0,T],\mathbb H^{n-1,p}(\mathcal{O}))$,
$g\in M^p([0,T],\gamma(H,\mathbb{H}^{n,p}(\mathcal{O})))$, $p>2$,
$\frac{1}{2}>\beta>\al>\frac{1}{p}$. Then equation
\eqref{eqn:StrongOrnsteinUhlenbeck-1} has unique solution $z\in
C^{\al-\frac{1}{p}}([0,T],\mathbb{H}^{n+1-2\beta,p}(\mathcal{O}))$ a.s..
\end{theorem}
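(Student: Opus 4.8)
The plan is to read off the regularity of $z$ from the variation-of-constants formula by treating the deterministic and stochastic convolutions separately, writing $z=z_{\mathrm{det}}+z_{\mathrm{stoch}}$ with
\[
z_{\mathrm{det}}(t)=\int_0^t S_{t-s}^\nu f(s)\,ds,\qquad z_{\mathrm{stoch}}(t)=\int_0^t S_{t-s}^\nu g(s)\,dW(s).
\]
Uniqueness is then immediate: any mild solution in the stated class is represented by this formula, which is determined by the data $f,g$, so the difference of two solutions solves the homogeneous equation with zero data and vanishes. The single analytic input driving everything is the smoothing estimate of the heat semigroup on the Bessel-potential scale, namely that for each $\theta\ge 0$ there is $C>0$ with $|S_t^\nu h|_{s+\theta,p}\le C t^{-\theta/2}|h|_{s,p}$ for $t\in(0,T]$ and all real $s$, together with its operator-valued analogue $\|S_t^\nu\Phi\|_{\gamma(H,\mathbb H^{s+\theta,p}(\mathcal O))}\le C t^{-\theta/2}\|\Phi\|_{\gamma(H,\mathbb H^{s,p}(\mathcal O))}$.

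For the stochastic term I would use the factorization method. Fixing $\alpha$ as in the hypothesis, the identity $\int_r^t (t-s)^{\alpha-1}(s-r)^{-\alpha}\,ds=\pi/\sin(\pi\alpha)$ together with the stochastic Fubini theorem rewrites
\[
z_{\mathrm{stoch}}(t)=\frac{\sin(\pi\alpha)}{\pi}\int_0^t (t-s)^{\alpha-1}S_{t-s}^\nu Y(s)\,ds,\qquad Y(s)=\int_0^s (s-r)^{-\alpha}S_{s-r}^\nu g(r)\,dW(r).
\]
The argument then splits into showing that $Y$ lies in $L^p$ in time almost surely, and that the outer factorization integral maps such $Y$ into the claimed Hölder-in-time space.

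The first step is to show $\mathbb E\int_0^T|Y(s)|_{n+1-2\beta,p}^p\,ds<\infty$. Since for $p\ge 2$ the space $\mathbb H^{n+1-2\beta,p}(\mathcal O)$ is of $M$-type $2$, the stochastic integration theory of \cite{Brzezniak-1997} gives the Burkholder-type bound
\[
\mathbb E|Y(s)|_{n+1-2\beta,p}^p\le C\,\mathbb E\left(\int_0^s (s-r)^{-2\alpha}\|S_{s-r}^\nu g(r)\|_{\gamma(H,\mathbb H^{n+1-2\beta,p}(\mathcal O))}^2\,dr\right)^{p/2}.
\]
Applying the $\gamma$-smoothing estimate with $\theta=1-2\beta$ bounds the integrand by $C(s-r)^{-(2\alpha+1-2\beta)}\|g(r)\|_{\gamma(H,\mathbb H^{n,p}(\mathcal O))}^2$. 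The weight $(s-r)^{-(2\alpha+1-2\beta)}$ is integrable near the diagonal precisely when $2\alpha+1-2\beta<1$, i.e. when $\alpha<\beta$, so Young's convolution inequality in the $r$-variable followed by integration in $s$ yields $\mathbb E\int_0^T|Y(s)|_{n+1-2\beta,p}^p\,ds\le C\,\|g\|_{M^p([0,T],\gamma(H,\mathbb H^{n,p}(\mathcal O)))}^p$.

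The second step invokes the factorization lemma: for $\alpha>1/p$ the operator $h\mapsto\int_0^t(t-s)^{\alpha-1}S_{t-s}^\nu h(s)\,ds$ is bounded from $L^p([0,T],E)$ into $C^{\alpha-1/p}([0,T],E)$ for any Banach space $E$ on which $(S_t^\nu)$ is a bounded analytic semigroup; with $E=\mathbb H^{n+1-2\beta,p}(\mathcal O)$ this and Step~1 give $z_{\mathrm{stoch}}\in C^{\alpha-1/p}([0,T],\mathbb H^{n+1-2\beta,p}(\mathcal O))$ almost surely. The deterministic convolution is handled by the same estimates without the probabilistic layer: the smoothing bound with $\theta=2-2\beta$ gives $|S_{t-s}^\nu f(s)|_{n+1-2\beta,p}\le C(t-s)^{-(1-\beta)}|f(s)|_{n-1,p}$, whose time-convolution against $f\in L^p$ is finite and Hölder continuous because $\beta>1/p$, so $z_{\mathrm{det}}$ is at least as regular as $z_{\mathrm{stoch}}$. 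I expect the genuine obstacle to be Step~1, where the $M$-type $2$ stochastic maximal estimate, the $\gamma$-radonifying smoothing of the semigroup, and the singular weight $(s-r)^{-\alpha}$ must be balanced so that the combined exponent stays below $1$; it is exactly here that the hypothesis $\alpha<\beta$ (integrability of $Y$) and, in the subsequent factorization, $\alpha>1/p$ (the final Hölder regularity) are both consumed.
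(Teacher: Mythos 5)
Your argument is correct, and it is essentially the argument behind the result the paper invokes: the paper gives no proof of this theorem, deferring to Brze\'zniak \cite{Brzezniak-1997} (Corollary 3.5) and Krylov \cite{[Krylov-1999]}, and the cited proof for the whole space is exactly your route --- the Da Prato--Kwapie\'n--Zabczyk factorization combined with the $M$-type~2 Burkholder inequality and the $\gamma$-radonifying smoothing of the heat semigroup, with the two hypotheses $\al<\beta$ and $\al>\frac1p$ consumed precisely where you place them. The only point worth flagging is that your treatment of the deterministic convolution's H\"older continuity is compressed (one still needs the increment estimate via $(S_{t'-t}-I)z_{\mathrm{det}}(t)$, which works because $\beta>\frac1p$ leaves room for an extra $2\delta$ derivatives with $\delta<\beta-\frac1p$), but this is routine and does not affect correctness.
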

For $\phi\in\mathbb H^{1,p}(\mathcal O)$ we define a function
\[
F(v)=(v\nabla)v.
\]

\begin{definition}\label{def-2.2}
Assume that %$(W_t)_{t\geq 0}$ is a standard cylindrical $H$-valued Wiener
%process,
$u_0\in
\mathbb L^p(\mathcal{O})$, $f\in M^p([0,T],\mathbb H^{-1,p}(\mathcal{O}))$,
$g\in
M^p([0,T],\gamma(H,\mathbb L^{p}(\mathcal{O})))$.
A progressively measurable $\mathbb L^p(\mathcal{O})$-valued continuous process $u$ defined on $[0,T]$ is said to be a mild solution of the
stochastic Burgers equation with the initial condition $u_0$ if $F(u(\cdot))\in L^1(0,T;\mathbb{L}^p(\mathcal{O}))$ a.s., $u=v+z$ where
$z:\Omega\to L^{\infty}(0,T;\mathbb{L}^p(\mathcal{O}))$
satisfies equation \eqref{eqn:StrongOrnsteinUhlenbeck-1} and
$v$ satisfies equality
\begin{equation}
v(t)=S_t^{\nu}u_0+\intl_0^tS_{t-s}^{\nu}(F(v(s)+z(s)))\, ds,\;t\in[0,T].\label{eqn:BurgersEqn-1}
\end{equation}
\del{\begin{equation}
dz=(\nu\triangle
z+f)dt+gdw_t,z(0)=0\label{eqn:OrnsteinUhlenbeck-1}
\end{equation}}
\end{definition}
\begin{remark}\label{rem-2.1}
We believe it is possible to define a weak solution to Burgers equation as in definition 8.5, p. 184 of \cite{BZ2010}.
Then it should be possible to prove that sufficiently regular process $u$ is a weak solution iff it is a mild solution, i.e. solves
\begin{equation}
u(t)=S_t^{\nu}u_0+\intl_0^tS_{t-s}^{\nu}(F(u(s)))\, ds+z(s),\;t\in[0,T],\label{eqn:BurgersEqn-1'}
\end{equation}
where $z$ satisfies equation \eqref{eqn:StrongOrnsteinUhlenbeck-1}.
In our paper we prove the existence and uniqueness of the solution of \eqref{eqn:BurgersEqn-1'}. This is done via a substitution
\begin{equation}
u=v+z.\label{eqn:Substitution_1}
\end{equation}
For a process of the form \eqref{eqn:Substitution_1} we can prove that it is a mild solution iff it is a strong solution according to
the following definition.
\end{remark}
\begin{definition}\label{def-2.3}
Assume that  $u_0$, $f$,
$g$ satisfy the same assumptions as in the definition \ref{def-2.2}.
We call progressively measurable process $u:\Omega\to
L^{\infty}(0,T;\mathbb{L}^p(\mathcal{O}))$ a strong solution of
stochastic Burgers equation with the initial condition $u_0$  %, the deterministic force $f$ and stochastic force $g$
iff $F(u)\in L^1(0,T;\mathbb{L}^p(\mathcal{O}))$ a.s. and $u=v+z$ where
$z:\Omega\to L^{\infty}(0,T;\mathbb{L}^p(\mathcal{O}))$
satisfies equation \eqref{eqn:StrongOrnsteinUhlenbeck-1} and
$v\in C^1((0,T];\mathbb{L}^p(\mathcal{O}))$ satisfies equality
\begin{eqnarray}
\frac{\partial v}{\partial t}(t)&=&\nu\triangle
v(t)+F(v(t)+z(t)),\;t\in[0,T]\label{eqn:StrongBurgersEqn-1}\\
v(0)&=&u_0.
\end{eqnarray}
\end{definition}
\begin{remark}
It is possible to define in a similar fashion strong and mild solution of stochastic Burgers equation without referring to the Ornstein-Uhlenbeck process $z$. However, the definition given above has certain merit since it allows to transfer all noise effects to the process $z$ and consider
PDE with random coefficients instead of SPDE.
\end{remark}

\section{The existence of a local solution to the Stochastic Burgers equation}

Theorem \ref{thm:OrnsteinUhlenbeckReg} allow us to work pathwise
i.e. we assume that some version of $z$ of specified regularity is
fixed.

Local existence of solution of Burgers equation in
$\mathbb{L}^p(\mathcal{O})$ can be shown in the same way as for
Navier-Stokes equation (see
\cite{FujitaKato-1964}, \cite{FujitaKato-1962}, \cite{Kato-1984}, \cite{Weissler-1980}, \cite{FabesJonesRiviere-1972}
and others). Here we only state main points of the proof following
the work of Weissler \cite{Weissler-1980}.

We will use following version of abstract theorem proved in
\cite{Weissler-1980}, p. 222, Theorem 2, see also
\cite{FujitaKato-1962} and \cite{Kato-1984}.

\begin{theorem}\label{thm:AbstractExistence}
Let $W$, $X$, $Y$, $Z$ be Banach spaces continuously embedded in
some topological vector space $\mathcal{X}$. $R_t=e^{tA},\;t\geq 0$
be $C_0$-semigroup on X, which satisfies the following additional
conditions
\begin{trivlist}
\item[(a1)] For each $t>0$, $R_t$ extends to a bounded map $W\to
X$. For some $a>0$ there are positive constants $C$ and $T$ such
that
\begin{equation}
|R_th|_X\leq Ct^{-a}|h|_W,h\in W,\;t\in(0,T].\label{eqn:a1Cond}
\end{equation}
\item[(a2)]For each $t>0$, $R_t$ is a bounded map $X\to Y$. For
some $b>0$ there are positive constants $C$ and $T$ such that
\begin{equation}
|R_th|_Y\leq Ct^{-b}|h|_X,h\in X,\;t\in(0,T].\label{eqn:a2Cond}
\end{equation}
Furthermore, function $|R_th|_Y\in C((0,T]), h\in X$ and
\begin{equation}
\liml_{t\to 0+}t^b|R_th|_Y=0,\forall h\in X.\label{eqn:a2Cond'}
\end{equation}

\item[(a3)]For each $t>0$, $R_t$ is a bounded map $X\to Z$. For
some $c>0$ there are positive constants $C$ and $T$ such that
\begin{equation}
|R_th|_Z\leq Ct^{-c}|h|_X,h\in X,\;t\in(0,T].\label{eqn:a3Cond}
\end{equation}
Furthermore, function $|R_th|_Z\in C((0,T]), h\in X$ and
\begin{equation}
\liml_{t\to 0+}t^c|R_th|_Z=0,\forall h\in X.\label{eqn:a3Cond'}
\end{equation}
\end{trivlist}
Let also $G:Y\times Z\to W$ be a bounded bilinear map, $L\in
L^{\infty}(0,T;\mathcal{L}(Y\cap Z,W))$, and let $G(u)=G(u,u),u\in
Y\cap Z$, $f\in L^{\infty}(0,T;W)$. Assume also that $a+b+c\leq
1$.

Then for each $u_0\in X$ there is $T>0$ and unique function
$u:[0,T]\to X$ such that:
\begin{trivlist}
\item[(a)] $u\in C([0,T],X)$, $u(0)=u_0$. \item[(b)] $u\in
C((0,T],Y)$, $\liml_{t\to 0+}t^b|u(t)|_Y=0$. \item[(c)] $u\in
C((0,T],Z)$, $\liml_{t\to 0+}t^c|u(t)|_Z=0$.\item[(d)]
$$
u(t)=R_tu_0+\intl_0^tR_{t-\tau}(G(u(\tau))+L(u(\tau))+f(\tau))d\tau,\;t\in
[0,T]
$$
\end{trivlist}
\end{theorem}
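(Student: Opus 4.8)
The plan is to realize the solution as the unique fixed point of the Duhamel map in (d) via the Banach contraction principle, working in a space that encodes the prescribed blow-up rates near $t=0$. Concretely, I would set
\[
\mathcal{E}_T=\Bigl\{u\in C((0,T],Y\cap Z):\|u\|_T:=\supl_{0<t\le T}\bigl(t^{b}|u(t)|_Y+t^{c}|u(t)|_Z\bigr)<\infty,\ \liml_{t\to0+}t^{b}|u(t)|_Y=\liml_{t\to0+}t^{c}|u(t)|_Z=0\Bigr\},
\]
which is a Banach space under $\|\cdot\|_T$, and define $\Phi u$ to be the right-hand side of (d). The first task is to record composite smoothing estimates for the semigroup applied to elements of $W$: writing $R_t=R_{t/2}R_{t/2}$ and chaining (a1) with (a2) and with (a3) yields constants such that $|R_th|_Y\le C t^{-a-b}|h|_W$ and $|R_th|_Z\le C t^{-a-c}|h|_W$ for $h\in W$, while (a1) itself controls $W\to X$. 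Note that $a,b,c>0$ together with $a+b+c\le1$ forces $a+b<1$, $a+c<1$ and $b+c<1$, so every composite exponent lies in the convergent range.

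Next I would estimate the three integral contributions. Boundedness and bilinearity of $G$ give $|G(u(\tau))|_W\le\|G\|\,|u(\tau)|_Y|u(\tau)|_Z\le\|G\|\,\tau^{-b-c}\|u\|_T^2$, and similarly $|L(\tau)u(\tau)|_W\le\|L\|_\infty\tau^{-\max(b,c)}\|u\|_T$ and $|f(\tau)|_W\le\|f\|_\infty$. Feeding these into the composite bounds and using the Beta-integral identity $\intl_0^t(t-\tau)^{\rho-1}\tau^{\sigma-1}\,d\tau=t^{\rho+\sigma-1}B(\rho,\sigma)$ gives, for the $Y$-component of the nonlinear term,
\[
t^{b}\Bigl|\intl_0^tR_{t-\tau}G(u(\tau))\,d\tau\Bigr|_Y\le C\|G\|\,t^{1-a-b-c}B(1-a-b,1-b-c)\,\|u\|_T^2,
\]
with the analogous bound for the $Z$-component, and the hypothesis $a+b+c\le1$ is exactly what makes the exponent $1-a-b-c$ nonnegative so the factor stays bounded on $(0,T]$; the $L$- and $f$-terms carry strictly positive powers of $t$ and hence vanish as $t\to0$. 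Combining these with the datum estimate $t^b|R_tu_0|_Y\le C|u_0|_X$ from (a2) (and its $Z$-analogue from (a3)) shows $\Phi$ maps a closed ball $B_R\subset\mathcal{E}_T$ into itself for suitable $R,T$. For contraction, bilinearity gives $G(u)-G(u')=G(u-u',u)+G(u',u-u')$, so $\|\Phi u-\Phi u'\|_T$ is bounded by a constant of the form $C\,(R+\text{lower order})\,t^{1-a-b-c}\,\|u-u'\|_T$, which is made $<1$ by taking $R$ and $T$ small. The Banach fixed point theorem then delivers a unique $u\in B_R$ solving (d).

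It remains to upgrade this fixed point to a solution satisfying (a)--(c). Conditions (b) and (c) follow from membership in $\mathcal{E}_T$ once I verify that the Duhamel integrals themselves obey the vanishing limits; condition (a) follows from strong continuity $R_tu_0\to u_0$ in $X$ together with the $W\to X$ estimate from (a1), which controls the integral term in $X$ and shows it tends to $0$. The main obstacle is the borderline case $a+b+c=1$: there the exponent $1-a-b-c$ is $0$, so the crude power-of-$t$ bounds on the nonlinear term no longer force the required limits in $Y$, $Z$, or $X$. I would resolve this by exploiting the vanishing hypotheses \eqref{eqn:a2Cond'} and \eqref{eqn:a3Cond'} together with the defining vanishing of $u$ in $\mathcal{E}_T$: since $(\tau^b|u(\tau)|_Y)(\tau^c|u(\tau)|_Z)\to0$ as $\tau\to0$, splitting $\intl_0^t=\intl_0^{\delta}+\intl_{\delta}^t$ makes the first piece uniformly small in $t$ while the second carries a genuine positive power of $t-\delta$ and vanishes as $t\to0$; a dominated-convergence argument then secures all the limits and the $X$-continuity at $t=0$. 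This borderline analysis is precisely what dictates the smallness of the radius $R$ in the self-mapping and contraction steps, so I would establish it once and reuse it throughout.
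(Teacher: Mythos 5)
The paper does not actually prove this theorem: the statement is quoted (with the extra terms $L$ and $f$ added) from Weissler's Theorem~2, and the only accompanying text is the remark that Weissler treats $L=f=0$ and that ``the general case follows similarly.'' So there is no in-paper argument to compare against; what your proposal does is reconstruct, essentially correctly, the contraction-mapping proof that Weissler's paper contains. The ingredients you use are the right ones: the weighted space $\mathcal{E}_T$ with the vanishing conditions built into the definition, the composite smoothing bounds $|R_th|_Y\le Ct^{-a-b}|h|_W$ and $|R_th|_Z\le Ct^{-a-c}|h|_W$ obtained by factoring $R_t=R_{t/2}R_{t/2}$, the Beta-function computation producing the factor $t^{1-a-b-c}$, and---crucially---the observation that in the borderline case $a+b+c=1$ the smallness needed for the self-map and the contraction must come from the radius $R$ (forced small via \eqref{eqn:a2Cond'} and \eqref{eqn:a3Cond'} applied to $R_tu_0$) and from splitting the Duhamel integral at a small $\delta$, rather than from a positive power of $T$. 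That is exactly the role those vanishing hypotheses play, and your treatment of the $L$ and $f$ terms as linear, hence harmless, perturbations of the contraction constant is precisely the content of the paper's remark that the general case follows similarly.

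One loose end: the fixed point theorem gives uniqueness only within the ball $B_R\subset\mathcal{E}_T$, whereas the theorem asserts uniqueness among all functions satisfying (a)--(d). To close this, note that any such function belongs to $\mathcal{E}_{T'}$ with $\|u\|_{T'}\to 0$ as $T'\to 0$ (precisely because of the limits required in (b) and (c)), hence lies in $B_R$ on a short initial interval and there coincides with your fixed point; a standard connectedness/continuation argument, using that away from $t=0$ the data are bounded and the local uniqueness is unproblematic, then propagates the coincidence to all of $[0,T]$. This step is routine but should be stated explicitly.
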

\begin{remark}
Weissler \cite{Weissler-1980} considers only the case of $L=f=0$.
The general case follows similarly (see also \cite{Goldys-Neklyudov}).
\end{remark}
In the next proposition we will summarize properties of the heat
semigroup on $\mathcal{O}$.
\begin{proposition}\label{prop:HeatSemigroup}
Assume that either $\mathcal{O}=\mathbb{T}^d$ or $\mathcal{O}=\mathbb{R}^d$ and $\triangle$ is a corresponding periodic (respectively free) Laplacian with domain of definition $\mathrm{dom}_{\mathbb{L}^p(\mathcal{O})}(\Delta)$.
%\coma{Something should be assumed.}
Then
\begin{trivlist}
\item[(i)]
\begin{eqnarray}
|\nabla^me^{t\triangle}h|_{\mathbb{L}^q(\mathcal{O})}\leq c
t^{-\frac{m}{2}-\frac{d}{2r}}|h|_{\mathbb{L}^p(\mathcal{O})},\;t\in (0,T],\label{eqn:HeatSemigroupEstimate-1}\\
\frac{1}{r}=\frac{1}{p}-\frac{1}{q},\quad 1<p\leq q<\infty,\quad
h\in \mathbb{L}^p(\mathcal{O}).\nonumber
\end{eqnarray}
Furthermore,
\begin{equation}
\liml_{t\to
0+}t^{\frac{m}{2}+\frac{d}{2r}}|\nabla^me^{t\triangle}h|_{\mathbb{L}^q(\mathcal{O})}=0,
\quad h\in \mathbb{L}^p(\mathcal{O}).\label{eqn:HeatSemigroupLimit-1}
\end{equation}
%\item[(ii)]Let $p\in (1,\infty)$ and $\al<\beta$. Then for any
%$t>0$ $e^{t\triangle}$ is a bounded map $H^{\al,p}\to
%H^{\beta,p}$. Moreover, for each $T>0$ there exists
%$C=C(p,\al,\beta)$, such that
%\begin{equation}
%|e^{t\triangle}h|_{H^{\beta,p}}\leq
%Ct^{(\al-\beta)/2}|h|_{H^{\al,p}},\;t\in (0,T],h\in H^{\al,p}.
%\label{eqn:HeatSemigroupEstimate-2}
%\end{equation}
%Furthermore,
%\begin{equation}
%\liml_{t\to
%0+}t^{(\beta-\al)/2}|e^{t\triangle}h|_{H^{\beta,p}}=0,\quad h\in
%H^{\al,p}.\label{eqn:HeatSemigroupLimit-2}
%\end{equation}
\item[(ii)] Let $p\in(1,\infty)$. Then for any $t>0$,
$e^{t\triangle}:\mathbb{L}^p(\mathcal{O})\to \mathbb{H}^{1,p}(\mathcal{O})$ is a bounded map. Moreover, for
each $T>0$ there exists $C=C(p,T)$, such that
\begin{equation}
|e^{t\triangle}h|_{\mathbb{H}^{1,p}(\mathcal{O})}\leq Ct^{-\frac{1}{2}}|h|_{\mathbb{L}^p(\mathcal{O})},\;t\in
(0,T],h\in \mathbb{L}^p(\mathcal{O}).\label{eqn:HeatSemigroupEstimate-3}
\end{equation}
Furthermore,
\begin{equation}
\liml_{t\to 0+}t^{\frac{1}{2}}|e^{t\triangle}h|_{\mathbb{H}^{1,p}(\mathcal{O})}=0,\quad
h\in \mathbb{L}^p(\mathcal{O}).\label{eqn:HeatSemigroupLimit-3}
\end{equation}
\end{trivlist}
\end{proposition}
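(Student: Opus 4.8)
The plan is to reduce both cases to the explicit Gaussian kernel and exploit scaling. On $\mathcal O=\mathbb{R}^d$ the semigroup is convolution, $e^{t\triangle}h=G_t*h$ with $G_t(x)=(4\pi t)^{-d/2}e^{-|x|^2/4t}$, and the key identity is $\nabla^m G_t(x)=t^{-(d+m)/2}(\nabla^m G_1)(x/\sqrt t)$. First I would introduce the Young exponent $\rho\in[1,\infty)$ defined by $1+\frac1q=\frac1\rho+\frac1p$, so that $\frac1\rho=1-\frac1r$, and compute $|\nabla^m G_t|_{L^\rho(\mathbb{R}^d)}$: the substitution $y=x/\sqrt t$ gives $|\nabla^m G_t|_{L^\rho(\mathbb{R}^d)}=c\,t^{-\frac m2-\frac d2(1-\frac1\rho)}=c\,t^{-\frac m2-\frac{d}{2r}}$ with $c=|\nabla^m G_1|_{L^\rho(\mathbb{R}^d)}<\infty$. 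Young's convolution inequality $|\nabla^m G_t*h|_{\mathbb{L}^q}\le|\nabla^m G_t|_{L^\rho}\,|h|_{\mathbb{L}^p}$ then yields \eqref{eqn:HeatSemigroupEstimate-1} at once.

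For the vanishing statement \eqref{eqn:HeatSemigroupLimit-1} I would argue by density. Given $h\in\mathbb{L}^p(\mathcal O)$ and $\varepsilon>0$, pick $\phi\in C_c^\infty$ with $|h-\phi|_{\mathbb{L}^p}<\varepsilon$; the estimate already proved bounds the $h-\phi$ contribution by $c\varepsilon$ uniformly in $t$, while $\nabla^m e^{t\triangle}\phi=e^{t\triangle}\nabla^m\phi$ stays bounded in $\mathbb{L}^q$ as $t\to0+$ (strong continuity of the contraction semigroup on $\mathbb{L}^q$), so the prefactor $t^{\frac m2+\frac{d}{2r}}$ annihilates it; as $\varepsilon$ is arbitrary the limit is $0$. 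This uses $\frac m2+\frac{d}{2r}>0$, i.e. $m\ge1$ or $p<q$, which is the only regime in which the claim is nontrivial.

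Part (ii) I would deduce from (i) together with the standard equivalence $|g|_{\mathbb{H}^{1,p}(\mathcal O)}\asymp|g|_{\mathbb{L}^p(\mathcal O)}+|\nabla g|_{\mathbb{L}^p(\mathcal O)}$, valid for $1<p<\infty$ by $L^p$-boundedness of the Riesz transforms (Mikhlin--H\"ormander). Using \eqref{eqn:HeatSemigroupEstimate-1} with $m=0,\ p=q$ (where $\rho=1$, $|G_t|_{L^1}=1$, so $e^{t\triangle}$ is an $\mathbb{L}^p$-contraction) and with $m=1,\ p=q$ gives $|e^{t\triangle}h|_{\mathbb{H}^{1,p}}\le C(1+t^{-1/2})|h|_{\mathbb{L}^p}\le C't^{-1/2}|h|_{\mathbb{L}^p}$ on $(0,T]$, which is \eqref{eqn:HeatSemigroupEstimate-3}; and \eqref{eqn:HeatSemigroupLimit-3} follows in the same way from \eqref{eqn:HeatSemigroupLimit-1} with $m=1$.

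For $\mathcal O=\mathbb{T}^d$ I would repeat the argument with the periodised kernel $G_t^{\mathrm{per}}=\sum_{k\in\mathbb{Z}^d}G_t(\cdot+k)$ and the convolution inequality on the torus. The one new point is the bound on $|\nabla^m G_t^{\mathrm{per}}|_{L^\rho(\mathbb{T}^d)}$: splitting off $k=0$ reproduces the $\mathbb{R}^d$ contribution $c\,t^{-\frac m2-\frac{d}{2r}}$, while the terms $k\neq0$ are supported away from the origin and are $O(e^{-c/t})$ as $t\to0+$, hence bounded on $(0,T]$; thus the same powers of $t$ survive, and density of smooth periodic functions gives the limits exactly as before. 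I expect the scaling computation and Young's inequality to be routine; the steps needing care are the periodisation estimate on the torus — checking that the short-time blow-up rate is genuinely unchanged — and invoking the correct $\mathbb{H}^{1,p}$ norm equivalence in part (ii).
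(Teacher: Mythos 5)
Your argument is correct, but there is nothing in the paper to compare it against: the authors do not prove Proposition \ref{prop:HeatSemigroup}, they simply refer the reader to the books of Lunardi and of Quittner--Souplet, where precisely the proof you outline (explicit Gaussian kernel, parabolic scaling $\nabla^m G_t(x)=t^{-(d+m)/2}(\nabla^mG_1)(x/\sqrt t)$, Young's convolution inequality with $1+\frac1q=\frac1\rho+\frac1p$, a density argument for the vanishing limits, and periodisation for the torus) is carried out. So your write-up supplies the standard proof that the paper chooses to cite, and the exponent bookkeeping ($\frac1\rho=1-\frac1r$, hence $|\nabla^mG_t|_{L^\rho}=c\,t^{-\frac m2-\frac d{2r}}$) is right. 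Two small points deserve a sentence each. First, you correctly flag that the limit \eqref{eqn:HeatSemigroupLimit-1} requires $\frac m2+\frac d{2r}>0$; in the excluded case $m=0$, $p=q$ the prefactor is $1$ and the asserted limit is actually \emph{false} for $h\neq0$ (strong continuity gives $|h|_{\mathbb{L}^p}$), so the proposition must be read in the regime $m\ge1$ or $p<q$, which is the only way it is invoked in the paper. Second, on the torus you should justify that the periodic heat semigroup is convolution with $G_t^{\mathrm{per}}=\sum_{k}G_t(\cdot+k)$ (uniqueness of the heat kernel or Poisson summation), and note that your $k\neq0$ tail bound $O(e^{-c/t})$ must be made uniform for $t\in(0,T]$, which it is since $\sum_{k\neq0}t^{-(d+m)/2}e^{-c|k|^2/t}\le C(T)$; likewise the norm equivalence $|g|_{\mathbb{H}^{1,p}}\asymp|g|_{\mathbb{L}^p}+|\nabla g|_{\mathbb{L}^p}$ used in part (ii) is exactly where the hypothesis $1<p<\infty$ enters. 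With these remarks the proof is complete.
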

\begin{proof}
%The results above are well known in case of $\mathcal O=\mathbb
%R^d$. If $\mathcal O=\mathbb T^d$ then the proposition is well known for
%the Dirichlet boundary conditions,
See for example books by %\coma{Use citations here.Corrected.}
Lunardi \cite{Lunardi} or by Quittner, Souplet \cite{QuittnerSouplet}. %Analogous
%statements for the periodic Laplacian follow easily by the same
%method.
\end{proof}
Now we can formulate the following results about the existence and uniqueness of a local mild solution of the auxiliary deterministic problem.
\begin{theorem}\label{thm:LocalExistence-2} Assume that $p\geq d$. Then
for all $u_0\in \mathbb{L}^p(\mathcal{O})$, $z\in
L^{\infty}(0,T;\mathbb{L}^{2p}(\mathcal{O})\cap \mathbb{H}^{1,p}(\mathcal{O}))$,
 there exists
$T_0=T_0(\nu,|u_0|_{\mathbb{L}^p(\mathcal{O})},|z|_{L^{\infty}(0,T;\mathbb{L}^{2p}(\mathcal{O})\cap
\mathbb{H}^{1,p}(\mathcal{O}))})>0$ such that there exists unique mild
solution $u\in L^{\infty}(0,T_0;\mathbb{L}^p(\mathcal{O}))$ of
equation \eqref{eqn:BurgersEqn-1}. Furthermore
\begin{trivlist}
\begin{comment}\item[(a)]$u:[0,T_0]\to \mathbb{L}^p(\mathcal{O})$ is
continuous and $u(0)=u_0$. \end{comment}
\item[(a)]$u:(0,T_0]\to
\mathbb{L}^{2p}(\mathcal{O})$ is continuous and $\liml_{t\to
0}t^{\frac{d}{4p}}|u(t)|_{\mathbb{L}^{2p}(\mathcal{O})}=0$. \item[(b)]$u:(0,T_0]\to
\mathbb{H}^{1,p}(\mathcal{O})$ is continuous and $\liml_{t\to
0}t^{\frac{1}{2}}|u(t)|_{\mathbb{H}^{1,p}(\mathcal{O})}=0$.
\end{trivlist}
\end{theorem}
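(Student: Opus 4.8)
The plan is to deduce Theorem~\ref{thm:LocalExistence-2} directly from the abstract fixed point result, Theorem~\ref{thm:AbstractExistence}, applied pathwise with $z$ fixed. First I would expand the nonlinearity along the substitution $u=v+z$: since $F(w)=(w\cdot\nabla)w$ is quadratic,
\[
F(v+z)=(v\cdot\nabla)v+(v\cdot\nabla)z+(z\cdot\nabla)v+(z\cdot\nabla)z,
\]
which splits equation \eqref{eqn:BurgersEqn-1} into a genuinely bilinear part $G(v)=(v\cdot\nabla)v$, a part $L(v)=(v\cdot\nabla)z+(z\cdot\nabla)v$ that is linear in $v$ with $z$ playing the role of a (time dependent) coefficient, and an inhomogeneous term $f=(z\cdot\nabla)z$ depending only on $z$. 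This is exactly the structure $G+L+f$ required by Theorem~\ref{thm:AbstractExistence}.

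The decisive step is the choice of the four Banach spaces. I would take $X=\mathbb L^p(\mathcal O)$ (where $u_0$ lives and where continuity up to $t=0$ is sought), $Y=\mathbb L^{2p}(\mathcal O)$, $Z=\mathbb H^{1,p}(\mathcal O)$, and $W=\mathbb L^{2p/3}(\mathcal O)$, with $R_t=S^\nu_t=e^{t\nu\Delta}$. The conclusions (a) and (b) already dictate $Y$, $Z$ and the exponents $b=\frac{d}{4p}$, $c=\frac12$, while $W$ is forced by where the bilinear term lands: if $u\in Y=\mathbb L^{2p}$ and $\nabla w\in\mathbb L^p$ (because $w\in Z=\mathbb H^{1,p}$), then H\"older's inequality with $\frac{1}{2p}+\frac1p=\frac{3}{2p}$ gives $G(u,w)=(u\cdot\nabla)w\in\mathbb L^{2p/3}=W$, so that $G\colon Y\times Z\to W$ is bounded and bilinear. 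The same bookkeeping shows $L\in L^\infty(0,T;\mathcal L(Y\cap Z,W))$ and $f\in L^\infty(0,T;W)$, with the relevant operator and $W$-norms controlled by $|z|_{L^\infty(0,T;\mathbb L^{2p}\cap\mathbb H^{1,p})}$; for instance $|(z\cdot\nabla)z|_{\mathbb L^{2p/3}}\le|z|_{\mathbb L^{2p}}\,|\nabla z|_{\mathbb L^p}$.

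It then remains to verify the semigroup hypotheses (a1)--(a3). Estimate \eqref{eqn:HeatSemigroupEstimate-1} with $m=0$ gives $R_t\colon W=\mathbb L^{2p/3}\to X=\mathbb L^p$ with exponent $a=\frac{d}{4p}$ (this needs $2p/3>1$, i.e. $p>3/2$, which is automatic once $d\ge 2$) and $R_t\colon X\to Y=\mathbb L^{2p}$ with exponent $b=\frac{d}{4p}$, while the limit \eqref{eqn:HeatSemigroupLimit-1} supplies the vanishing condition in (a2); estimates \eqref{eqn:HeatSemigroupEstimate-3}--\eqref{eqn:HeatSemigroupLimit-3} give (a3) with $c=\frac12$. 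Replacing $\Delta$ by $\nu\Delta$ merely rescales time, so the constants, and hence the existence time $T_0$, acquire a dependence on $\nu$. The admissibility condition $a+b+c\le 1$ becomes $\frac{d}{2p}+\frac12\le 1$, that is exactly $p\ge d$, which is precisely where the hypothesis enters. Feeding all of this into Theorem~\ref{thm:AbstractExistence} yields a unique $u$ on some $[0,T_0]$, with $T_0=T_0(\nu,|u_0|_{\mathbb L^p},|z|_{L^\infty(0,T;\mathbb L^{2p}\cap\mathbb H^{1,p})})$, solving \eqref{eqn:BurgersEqn-1}; conclusions (a) and (b) are then read off from parts (b) and (c) of Theorem~\ref{thm:AbstractExistence} under the above identifications of $Y$, $Z$, $b$, $c$.

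I expect the only genuine obstacle to be the exponent bookkeeping that pins down $W=\mathbb L^{2p/3}$ and forces the singularity budget $a+b+c\le 1$ to coincide with the hypothesis $p\ge d$; once the four spaces are chosen correctly, every remaining estimate is a routine application of H\"older's inequality together with Proposition~\ref{prop:HeatSemigroup}.
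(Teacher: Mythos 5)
Your proposal is correct and follows essentially the same route as the paper: the authors likewise apply Theorem~\ref{thm:AbstractExistence} with $X=\mathbb{L}^p(\mathcal{O})$, $Y=\mathbb{L}^{2p}(\mathcal{O})$, $Z=\mathbb{H}^{1,p}(\mathcal{O})$, $W=\mathbb{L}^{2p/3}(\mathcal{O})$, the decomposition $L=F(z,\cdot)+F(\cdot,z)$, $f=F(z,z)$, and the exponents $a=b=\frac{d}{4p}$, $c=\frac12$ verified via Proposition~\ref{prop:HeatSemigroup}. Your additional bookkeeping (the explicit check that $a+b+c\le 1$ is exactly $p\ge d$, and the remark that $2p/3>1$ is needed for the semigroup estimate) only makes explicit what the paper leaves implicit.
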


\begin{proof}[Proof of Theorem \ref{thm:LocalExistence-2}]
%\coma{We need to be more precise in verifying the assumptions of Theorem \eqref{thm:AbstractExistence}.}
We apply Theorem \ref{thm:AbstractExistence} to equation
\eqref{eqn:BurgersEqn-1} with $X=\mathbb{L}^p(\mathcal{O})$, $Y=\mathbb{L}^{2p}(\mathcal{O})$, $Z=\mathbb{H}^{1,p}(\mathcal{O})$,
$W=\mathbb{L}^{\frac{2p}{3}}(\mathcal{O})$, $R_{\cdot}=e^{\cdot\triangle}$, $L=F(z,\cdot)+F(\cdot,z)$, $f=F(z,z)$. It follows from the H\"{o}lder inequality that $F:\mathbb{L}^{2p}(\mathcal{O})\times
\mathbb{H}^{1,p}(\mathcal{O})\to \mathbb{L}^{\frac{2p}{3}}(\mathcal{O})$ is a bounded bilinear map. Hence the function $f$ satisfies the assumption of Theorem \eqref{thm:AbstractExistence}, i.e. $f\in L^\infty(0,T;W)$.
Condition
\eqref{eqn:a1Cond} is satisfied with $a=\frac{d}{4p}$ by estimate
\eqref{eqn:HeatSemigroupEstimate-1}. Conditions
\eqref{eqn:a2Cond}, \eqref{eqn:a2Cond'} are satisfied with
$b=\frac{d}{4p}$ by \eqref{eqn:HeatSemigroupEstimate-1} and
\eqref{eqn:HeatSemigroupLimit-1}. Conditions \eqref{eqn:a3Cond},
\eqref{eqn:a3Cond'} are satisfied with $c=\frac{1}{2}$ by
\eqref{eqn:HeatSemigroupEstimate-3} and
\eqref{eqn:HeatSemigroupLimit-3}.
\end{proof}
\begin{corollary}\label{cor:ClassicalSolutionReg}
Let $p\geq d$, $\theta\in (0,1)$, $u_0\in
\mathbb{L}^p(\mathcal{O})$, $z\in L^{\infty}(0,T;\mathbb{H}^{1,2p}(\mathcal{O})\cap
\mathbb{H}^{1,p}(\mathcal{O}))$, $z\in C^{\theta}((0,T],\mathbb{H}^{1,2p}(\mathcal{O}))$.
%($C^{\theta}$--H\"{o}lder continuous functions of order $\theta$)
Then $v\in C^1((0,T];\mathbb{L}^p(\mathcal{O}))\cap C((0,T];\mathbb{H}^{2,p}(\mathcal{O}))\cap
C^{\theta}_{\textrm{loc}}((0,T],\mathbb{H}^{2,p}(\mathcal{O}))\cap C_{\textrm{loc}}^{1+\theta}((0,T],\mathbb{L}^p(\mathcal{O}))$
and $v$ is a solution to the  system
\begin{equation}\label{eqn:BurgersEquation-11}
v'=\nu\triangle v-F(v+z).
\end{equation}
\end{corollary}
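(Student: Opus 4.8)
The plan is to read \eqref{eqn:BurgersEqn-1} as the Duhamel formula $v(t)=S^\nu_t u_0+\intl_0^t S^\nu_{t-s}g(s)\,ds$ with source $g(t):=F(v(t)+z(t))$, and to invoke the classical theory of maximal H\"older regularity for analytic semigroups. Recall that $e^{\nu t\triangle}$ is analytic on $\mathbb{L}^p(\mathcal{O})$ with domain $\mathbb{H}^{2,p}(\mathcal{O})$, and that $\mathbb{H}^{s,p}(\mathcal{O})$ is the domain of $(I-\triangle)^{s/2}$. The point is that \emph{once} we know $g\in C^{\theta}_{\textrm{loc}}((0,T_0];\mathbb{L}^p(\mathcal{O}))$ (here $T_0$ is the existence time of Theorem \ref{thm:LocalExistence-2}), the maximal H\"older regularity theorem for analytic semigroups (see Lunardi \cite{Lunardi}) gives at once that $v\in C^1_{\textrm{loc}}((0,T_0];\mathbb{L}^p(\mathcal{O}))\cap C((0,T_0];\mathbb{H}^{2,p}(\mathcal{O}))\cap C^{\theta}_{\textrm{loc}}((0,T_0];\mathbb{H}^{2,p}(\mathcal{O}))\cap C^{1+\theta}_{\textrm{loc}}((0,T_0];\mathbb{L}^p(\mathcal{O}))$, and that differentiating the Duhamel formula yields \eqref{eqn:BurgersEquation-11}. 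Thus everything reduces to local H\"older continuity in time of $g$ with values in $\mathbb{L}^p(\mathcal{O})$, which I would reach in two bootstrap steps.

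The first step improves the spatial integrability of $v$. Theorem \ref{thm:LocalExistence-2} gives $v\in C((0,T_0];\mathbb{L}^{2p}(\mathcal{O})\cap\mathbb{H}^{1,p}(\mathcal{O}))$. Writing $F(v+z)=(v+z)\cdot\nabla(v+z)$ and using the sharpened hypothesis $z\in\mathbb{H}^{1,2p}(\mathcal{O})$, H\"older's inequality already places every term in $\mathbb{L}^p(\mathcal{O})$ except those carrying the factor $\nabla v$; these lie only in $\mathbb{L}^{2p/3}(\mathcal{O})$. To upgrade $\nabla v$ from $\mathbb{L}^p$ to $\mathbb{L}^{2p}$ I would run a finite induction on the integrability exponent: feeding the current bound into the Duhamel integral and applying the smoothing estimate \eqref{eqn:HeatSemigroupEstimate-1} with $m=1$ raises the exponent by a fixed positive amount at each step, the gain being strictly positive precisely because $2p>d$ (which holds as $p\ge d$). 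After finitely many steps one obtains $v\in C((0,T_0];\mathbb{H}^{1,2p}(\mathcal{O}))$, bounded on every $[\eps,T_0]$, and hence $g\in C((0,T_0];\mathbb{L}^p(\mathcal{O}))$, again bounded on compact subintervals.

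The \textbf{main obstacle} is the second, temporal, bootstrap, which is circular at first sight: to see that $g$ is H\"older in $\mathbb{L}^p(\mathcal{O})$ one needs $v$ to be H\"older in time with values in $\mathbb{H}^{1,2p}(\mathcal{O})$, yet the regularity theory only returns such time-H\"older control once $g$ is already known to be H\"older. I would break the loop with analyticity alone. Since $g$ is bounded and continuous on $[\eps,T_0]$, the standard increment estimate $|(e^{\nu h\triangle}-I)e^{\nu\tau\triangle}|\le C h^{\al}\tau^{-\al}$ shows that the Duhamel integral is H\"older of every order $\al<1$ with values in $\mathbb{L}^p(\mathcal{O})$, while the same analyticity gives $v(t)\in\mathbb{H}^{s,p}(\mathcal{O})$ for every $s<2$, uniformly on compact subintervals (the term $S^\nu_t u_0$ is smooth in $t$ on $(0,T_0]$ and harmless). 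Interpolating the time-H\"older bound in $\mathbb{L}^p(\mathcal{O})$ against the uniform bound in $\mathbb{H}^{s,p}(\mathcal{O})$ with $1+\tfrac{d}{2p}<s<2$ (available because $d<2p$), together with the embedding $\mathbb{H}^{s,p}(\mathcal{O})\embed\mathbb{H}^{1,2p}(\mathcal{O})$, produces a first positive H\"older exponent for $v$ in $\mathbb{H}^{1,2p}(\mathcal{O})$. That exponent need not already reach $\theta$, so I would iterate: each pass through maximal regularity and reinterpolation replaces the current exponent $\beta$ by $1-\mu(1-\beta)$ with contraction factor $\mu=\tfrac12+\tfrac{d}{4p}<1$, whence the attainable exponent converges to $1$ and exceeds $\theta$ after finitely many rounds.

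Finally, since $F$ is a bounded bilinear map $\mathbb{H}^{1,2p}(\mathcal{O})\times\mathbb{H}^{1,2p}(\mathcal{O})\to\mathbb{L}^p(\mathcal{O})$ and $z\in C^{\theta}((0,T];\mathbb{H}^{1,2p}(\mathcal{O}))$ by hypothesis, the splitting $F(w(t))-F(w(s))=F(w(t)-w(s),w(t))+F(w(s),w(t)-w(s))$ with $w=v+z$ shows $g\in C^{\theta}_{\textrm{loc}}((0,T_0];\mathbb{L}^p(\mathcal{O}))$. Inserting this into the maximal H\"older regularity theorem of the first paragraph closes the argument. I expect the delicate bookkeeping to be entirely in the third step, namely verifying that the exponent produced by the interpolation–reinterpolation loop can be pushed past the prescribed $\theta$; the spatial bootstrap and the final application of the abstract theorem are routine.
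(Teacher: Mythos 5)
Your proposal is correct and shares the skeleton of the paper's own proof --- upgrade $v$ to $C((0,\cdot];\mathbb{H}^{1,2p}(\mathcal{O}))$, deduce local H\"older continuity in time of $F(v+z)$ with values in $\mathbb{L}^p(\mathcal{O})$, and conclude via Lunardi's maximal H\"older regularity theorem --- but both intermediate steps are carried out differently. For the spatial upgrade the paper does not bootstrap integrability: it applies the abstract existence result (Theorem \ref{thm:AbstractExistence}) a second time with $X=Y=\mathbb{L}^p(\mathcal{O})$, $Z=\mathbb{H}^{1,2p}(\mathcal{O})$, $W=\mathbb{L}^{2p/3}(\mathcal{O})$, obtaining in one stroke $v\in C((0,T_1];\mathbb{H}^{1,2p}(\mathcal{O}))$ with $t^{1/2}|v(t)|_{\mathbb{H}^{1,2p}(\mathcal{O})}\to 0$; your finite induction on the integrability exponent reaches the same conclusion at the cost of more bookkeeping (the gain per step is indeed positive since $2p>d$). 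For the temporal step the paper sidesteps your interpolation loop entirely: it bounds $\sup_t t^{\frac12+\frac{d}{4p}}|F(v(t)+z(t))|_{\mathbb{L}^p}$ and invokes Proposition 4.2.3 of Lunardi on the representation \eqref{eqn:BurgersEquation-3} to obtain H\"older continuity of order $\tfrac12-\tfrac{d}{4p}$ directly. The trade-off is that the paper's argument, as written, delivers only this fixed exponent for $F(v+z)$ and hence for $v$, so it yields the stated conclusion verbatim only when $\theta\le\tfrac12-\tfrac{d}{4p}$; your interpolation--reiteration scheme, with the recursion $\beta\mapsto 1-\mu(1-\beta)$ and contraction factor $\mu=\tfrac12+\tfrac{d}{4p}<1$, supplies precisely the bootstrap needed to push the exponent past an arbitrary $\theta<1$ (capped, of course, by the assumed $C^{\theta}$ regularity of $z$), and in that respect is more complete than the printed proof. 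I see no gap in your argument; the only caveat is notational, namely the sign discrepancy between the Duhamel formula \eqref{eqn:BurgersEqn-1} and equation \eqref{eqn:BurgersEquation-11}, which is already present in the paper itself.
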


\begin{proof}[Proof of Corollary \ref{cor:ClassicalSolutionReg}]

%By Theorem \ref{thm:LocalExistence-2} we have that $u(t)\in
%L^{2p}, u(t)\in H^{1,2p},\;t>0$.
%By theorem \ref{thm:LocalExistence-2} we have that $u(t)\in
%L^{2p}(\mathcal{O},\Rd)$, $t\in (0,T_0]$. %, u(t)\in H^{1,2p}(\mathcal{O},\Rd),\;t>0$.
%%%% correction
Let us show that there exist $T_1$ such that $v\in
C((0,T_1],\mathbb{H}^{1,2p}(\mathcal{O}))$ and $\liml_{t\to
0}t^{\frac{1}{2}}|u(t)|_{\mathbb{H}^{1,2p}(\mathcal{O})}=0$. We apply
Theorem \ref{thm:AbstractExistence} with following data
$X=Y=\mathbb{L}^p(\mathcal{O})$, $Z=\mathbb{H}^{1,2p}(\mathcal{O})$,
$W=\mathbb{L}^{\frac{2p}{3}}(\mathcal{O})$. Then it follows from
H\"{o}lder inequality that $F:\mathbb{L}^p(\mathcal{O})\times
\mathbb{H}^{1,2p}(\mathcal{O})\to \mathbb{L}^{\frac{2p}{3}}(\mathcal{O})$ is
a bounded bilinear map. Conditions \eqref{eqn:a1Cond} is satisfied
with $a=\frac{d}{4p}$ by estimate
\eqref{eqn:HeatSemigroupEstimate-1}. Conditions
\eqref{eqn:a2Cond},\eqref{eqn:a2Cond'} are satisfied with
arbitrary $b>0$ because heat semigroup is analytic on
$\mathbb{L}^p(\mathcal{O})$. Conditions
\eqref{eqn:a3Cond},\eqref{eqn:a3Cond'} are satisfied with
$c=\frac{1}{2}$ by \eqref{eqn:HeatSemigroupEstimate-3} and
\eqref{eqn:HeatSemigroupLimit-3}.

As a result by part (c) of the Theorem \ref{thm:AbstractExistence}
we get existence of $T_1$ such that $v\in
C((0,T_1],\mathbb{H}^{1,2p}(\mathcal{O}))$ and $\liml_{t\to
0}t^{\frac{1}{2}}|v(t)|_{\mathbb{H}^{1,2p}(\mathcal{O})}=0$. Put
$T_2=\min\{T,T_1\}$.
Therefore, we have
\begin{align}
|F(v+z)|_{L^1(0,T_2;\mathbb{L}^p(\mathcal{O}))}&\leq\intl_0^{T_2}|v(s)+z(s)|_{\mathbb{L}^{2p}(\mathcal{O})}|\nabla
(v+z)|_{\mathbb{L}^{2p}(\mathcal{O})}\,ds\nonumber\\
&\leq\intl_0^{T_2}\frac{1}{s^{\frac{d}{4p}+\frac{1}{2}}}\supl_s(s^{\frac{d}{4p}}|v(s)+z(s)|_{\mathbb{L}^{2p}})\supl_s(s^{\frac{1}{2}}|v(s)+z(s)|_{\mathbb{H}^{1,2p}})ds\nonumber\\
&\leq\supl_s(s^{\frac{d}{4p}}|v(s)+z(s)|_{\mathbb{L}^{2p}})\supl_s(s^{\frac{1}{2}}|v(s)+z(s)|_{\mathbb{H}^{1,2p}})T_2^{\frac{1}{2}-\frac{d}{4p}}<\infty.\label{eqn:aux-a}
\end{align}
Let us show that for any $\eps>0$ the function $F(v(\cdot)+z(\cdot)):[\eps,T_2]\to \mathbb{L}^p(\mathcal{O})$ is H\"{o}lder
continuous of order $\left(\frac{1}{2}-\frac{d}{4p}\right)$. Then the result will follow from
Theorem 4.3.4, p. 137 in \cite{Lunardi} and inequality \eqref{eqn:aux-a}. Since
$F:\mathbb{H}^{1,2p}(\mathcal{O})\to \mathbb{L}^p(\mathcal{O})$ is locally Lipschitz it is easy to notice that
it is enough to prove that $v:[\eps,T_2]\to \mathbb{H}^{1,2p}(\mathcal{O})$ is
H\"{o}lder continuous for any $\eps>0$. Since we have
the representation
\begin{equation}
v(t)=S_{t-\eps}v(\eps)-\intl_{\eps}^tS_{t-s}(F(v(s)+z(s)))ds,\;t\in
[\eps,T_2],\label{eqn:BurgersEquation-3}
\end{equation}
it is enough to show that each term of this representation
is H\"{o}lder continuous. Similarly to \eqref{eqn:aux-a} we have
\begin{equation}
\supl_{t\in[0,T_2]}t^{\frac{1}{2}+\frac{d}{4p}}|F(v(t)+z(t))|_{\mathbb{L}^p}\leq
\supl_ss^{\frac{d}{4p}}|v(s)+z(s)|_{\mathbb{L}^{2p}}\supl_ss^{\frac{1}{2}}|v(s)+z(s)|_{\mathbb{H}^{1,2p}}<\infty
\end{equation}
and it follows by Proposition 4.2.3 part (i), p.130 of
\cite{Lunardi} that $\intl_0^tS_{t-s}F(v(s)+z(s))\,ds\in
C^{\frac{1}{2}-\frac{d}{4p}}(0,T_2;L^p)$.
\end{proof}
\begin{corollary}\label{cor:ClassicalSolutionReg-2}
Suppose that assumptions of Corollary
\ref{cor:ClassicalSolutionReg} are satisfied. Assume also that
$z\in C^{\theta}((0,T],\mathbb{H}^{k+1,2p}(\mathcal{O}))$,
 for some $k\in \Nat$. Then $v\in
C^{\theta}((0,T],\mathbb{H}^{k+2,p}(\mathcal{O}))\cap
C^{1+\theta}((0,T],\mathbb{H}^{k,p}(\mathcal{O}))$.
\end{corollary}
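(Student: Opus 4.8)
The plan is to argue by induction on $k$, at each stage rerunning the scheme of the proof of Corollary~\ref{cor:ClassicalSolutionReg} with the base space $\mathbb L^p(\mathcal O)=\mathbb H^{0,p}(\mathcal O)$ replaced by $\mathbb H^{j,p}(\mathcal O)$. For $j\in\{0,1,\dots,k\}$ let $(\mathcal P_j)$ denote the statement
\[
z\in C^{\theta}((0,T],\mathbb H^{j+1,2p}(\mathcal O))\ \Longrightarrow\ v\in C^{\theta}_{\mathrm{loc}}((0,T],\mathbb H^{j+2,p}(\mathcal O))\cap C^{1+\theta}_{\mathrm{loc}}((0,T],\mathbb H^{j,p}(\mathcal O)).
\]
The base case $(\mathcal P_0)$ is exactly Corollary~\ref{cor:ClassicalSolutionReg}, and the assertion to be proved is $(\mathcal P_k)$.

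For the inductive step I would fix $j\in\{1,\dots,k\}$, assume $(\mathcal P_{j-1})$, and suppose $z\in C^{\theta}((0,T],\mathbb H^{j+1,2p}(\mathcal O))$. Since $\mathbb H^{j+1,2p}(\mathcal O)\embed\mathbb H^{j,2p}(\mathcal O)$, the hypothesis of $(\mathcal P_{j-1})$ holds and yields $v\in C^{\theta}_{\mathrm{loc}}((0,T],\mathbb H^{j+1,p}(\mathcal O))$. The heart of the matter is then the nonlinear estimate
\[
F(v+z)\in C^{\theta}_{\mathrm{loc}}((0,T],\mathbb H^{j,p}(\mathcal O)),
\]
which I would establish by expanding the bilinear form $F(a,b)=(a\cdot\nabla)b$ as $F(v+z)=F(v,v)+F(v,z)+F(z,v)+F(z,z)$ and treating each summand separately. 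Applying $\partial^{\alpha}$ with $|\alpha|\le j$ and the Leibniz rule produces products $\partial^{\beta}a\,\partial^{\gamma}\nabla b$ with $|\beta|\le j$ and $|\gamma|\le j$. The factor $\partial^{\beta}a$ always carries at most $j$ derivatives, hence lies in $\mathbb H^{1,p}(\mathcal O)$ or $\mathbb H^{1,2p}(\mathcal O)$; since $p>d$, the embeddings $\mathbb H^{1,p}(\mathcal O)\embed L^{\infty}$, $\mathbb H^{1,2p}(\mathcal O)\embed L^{\infty}$ and $\mathbb H^{1,p}(\mathcal O)\embed L^{2p}(\mathcal O)$ are available. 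Consequently, according as the differentiated factor $\partial^{\gamma}\nabla b$ is built from $v$ (so that it lies in $L^{p}$) or from $z$ (so that it lies in $L^{2p}$), I would measure the companion factor $\partial^{\beta}a$ in $L^{\infty}$ or in $L^{2p}$, and in either case Hölder's inequality places the product in $L^{p}$. This shows that $F$ is a bounded bilinear map into $\mathbb H^{j,p}(\mathcal O)$ on the relevant products of $\mathbb H^{j+1,p}(\mathcal O)$ and $\mathbb H^{j+1,2p}(\mathcal O)$; combined with the fact that $t\mapsto v(t)$ and $t\mapsto z(t)$ are $C^{\theta}$ into $\mathbb H^{j+1,p}(\mathcal O)$ and $\mathbb H^{j+1,2p}(\mathcal O)$ respectively and bounded on compact subintervals of $(0,T]$, this yields the claimed Hölder continuity of $t\mapsto F(v(t)+z(t))$.

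With the nonlinear term controlled I would view $v$ as the solution of the inhomogeneous linear equation $v'=\nu\triangle v-F(v+z)$ in $\mathbb H^{j,p}(\mathcal O)$, on which $\nu\triangle$ generates an analytic semigroup with domain $\mathbb H^{j+2,p}(\mathcal O)$. Exactly as in the proof of Corollary~\ref{cor:ClassicalSolutionReg}, for fixed $\eps>0$ I would use the representation
\[
v(t)=S^{\nu}_{t-\eps}v(\eps)-\intl_{\eps}^{t}S^{\nu}_{t-s}F(v(s)+z(s))\,ds,\qquad t\in[\eps,T].
\]
Analyticity of the semigroup puts the first term in $C^{\infty}((\eps,T],\mathbb H^{j+2,p}(\mathcal O))$, while the parabolic Schauder estimates (Proposition~4.2.3 and Theorem~4.3.4 of \cite{Lunardi}), applied to the convolution term and using $F(v+z)\in C^{\theta}([\eps,T],\mathbb H^{j,p}(\mathcal O))$, place that term in $C^{1+\theta}([\eps,T],\mathbb H^{j,p}(\mathcal O))\cap C^{\theta}([\eps,T],\mathbb H^{j+2,p}(\mathcal O))$. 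Letting $\eps\todown 0$ gives $(\mathcal P_j)$, which closes the induction and in particular establishes $(\mathcal P_k)$.

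I expect the nonlinear estimate to be the only genuine obstacle: one must check that $F$, which costs one derivative, nonetheless maps into $\mathbb H^{j,p}(\mathcal O)$ in spite of the integrability mismatch between $v$, controlled in the $\mathbb H^{\,\cdot\,,p}$ scale, and $z$, controlled in the $\mathbb H^{\,\cdot\,,2p}$ scale. This is precisely where the assumption $p>d$ is used, ensuring that every factor except the single highest-order derivative can be measured in $L^{\infty}$ or $L^{2p}$; the remaining work—a Leibniz expansion and one application of Hölder's inequality per term—is routine.
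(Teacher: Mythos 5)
Your argument is correct, and its overall architecture (bootstrap the regularity of the nonlinearity, then invoke Lunardi's parabolic maximal regularity on $[\eps,T]$ and let $\eps\todown 0$) coincides with the paper's. The genuine difference lies in how the key estimate $F(v+z)\in C^{\theta}([\eps,T],\mathbb{H}^{k,p}(\mathcal{O}))$ is obtained. The paper (which writes out only $k=1$ and asserts the general case "follows similarly") first raises the \emph{integrability} of $v$: using Theorem~\ref{thm:LocalExistence-2}(a) it notes $v(\eps)\in\mathbb{L}^{2p}(\mathcal{O})$, restarts Corollary~\ref{cor:ClassicalSolutionReg} at time $\eps$ with exponent $2p$ in place of $p$ to get $v\in C^{\theta}([\eps,T],\mathbb{H}^{2,2p}(\mathcal{O}))$, and then estimates every Leibniz term of $\nabla F(v+z)$ as a product of two $\mathbb{L}^{2p}$ functions, landing in $\mathbb{L}^{p}$ by the Cauchy--Schwarz inequality. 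You instead stay entirely on the $\mathbb{H}^{\cdot,p}$ scale for $v$, run an honest induction in $k$, and close the product estimates with the Sobolev embeddings $\mathbb{H}^{1,p}(\mathcal{O})\embed L^{\infty}\cap \mathbb{L}^{2p}(\mathcal{O})$. What each buys: your induction is cleaner and actually proves the general $k$ (the paper's scheme for $k\ge 2$ would require escalating integrability exponents, which is left implicit), and it avoids the paper's tacit use of $v(\eps)\in\mathbb{L}^{2p}$ together with a re-application of Corollary~\ref{cor:ClassicalSolutionReg} under strengthened hypotheses on $z$ (namely $z\in C^{\theta}((0,T],\mathbb{H}^{1,4p})$, which has to be extracted from $\mathbb{H}^{2,2p}\embed\mathbb{H}^{1,4p}$). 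The price is that your use of $\mathbb{H}^{1,p}(\mathcal{O})\embed L^{\infty}$ requires $p>d$, whereas Corollary~\ref{cor:ClassicalSolutionReg} is stated for $p\ge d$; this is cosmetic, since in the only Leibniz terms where you invoke $L^{\infty}$ with a single spare derivative you could instead measure both factors in $\mathbb{L}^{2p}$ (using $\mathbb{H}^{1,p}\embed\mathbb{L}^{2p}$, which needs only $2p\ge d$), reserving $L^{\infty}$ for the undifferentiated factor, which carries at least two derivatives.
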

\begin{proof}
We will show the result for $k=1$. General case follows similarly. We fix some $\eps>0$.
We have by the Theorem \ref{thm:LocalExistence-2}, part (a) that $v(\eps)\in \mathbb{L}^{2p}(\mathcal{O})$.
 As a result, by means of the corollary \ref{cor:ClassicalSolutionReg} we infer that
\begin{equation}
v\in C^{\theta}([\eps,T],\mathbb{H}^{2,2p}(\mathcal{O}))\cap
C^{1+\theta}([\eps,T],\mathbb{L}^{2p}(\mathcal{O})).
\label{eqn:ClassicalSolutionReg-3}
\end{equation}
Hence,
\begin{equation}
v+z\in
C^{\theta}([\eps,T],\mathbb{H}^{2,2p}(\mathcal{O}))\label{eqn:ClassicalSolutionReg-3'}
\end{equation}
Therefore, we have following estimates for nonlinearity
\begin{align}
|F(v+z)|_{C^{\theta}([\eps,T],\mathbb{L}^p(\mathcal{O}))}&\leq
|v+z|_{L^{\infty}(\eps,T;\mathbb{L}^{2p}(\mathcal{O}))}|\nabla(v+z)|_{C^{\theta}([\eps,T],\mathbb{L}^{2p}(\mathcal{O}))}\nonumber\\
&+|\nabla
(v+z)|_{L^{\infty}(\eps,T;\mathbb{L}^{2p}(\mathcal{O}))}|v+z|_{C^{\theta}([\eps,T],\mathbb{L}^{2p}(\mathcal{O}))}<\infty\label{eqn:ClassicalSolutionReg-4}
\end{align}
where we have used \eqref{eqn:ClassicalSolutionReg-3'}.
Furthermore,
\begin{align}
|\nabla F(v+z)|_{C^{\theta}([\eps,T],\mathbb{L}^p(\mathcal{O}))}&\leq C |\nabla
(v+z)|_{L^{\infty}(\eps,T;\mathbb{L}^{2p}(\mathcal{O}))}|\nabla
(v+z)|_{C^{\theta}([\eps,T],\mathbb{L}^{2p}(\mathcal{O}))}\nonumber\\
&+|v+z|_{L^{\infty}(\eps,T;\mathbb{L}^{2p}(\mathcal{O}))}|\triangle
(v+z)|_{C^{\theta}([\eps,T],\mathbb{L}^{2p}(\mathcal{O}))}\nonumber\\
&+|\triangle
(v+z)|_{L^{\infty}(\eps,T;\mathbb{L}^{2p}(\mathcal{O}))}|v+z|_{C^{\theta}([\eps,T],\mathbb{L}^{2p}(\mathcal{O}))}<\infty,\label{eqn:ClassicalSolutionReg-5}
\end{align}
where we have used \eqref{eqn:ClassicalSolutionReg-3'}. Thus,
combining inequalities \eqref{eqn:ClassicalSolutionReg-4} and
\eqref{eqn:ClassicalSolutionReg-5} we get $F(v+z)\in
C^{\theta}([\eps,T],\mathbb{H}^{1,p}(\mathcal{O}))$, $\forall \eps>0$.  Therefore by
a maximal regularity result, see Theorem 4.3.1, p. 134 of \cite{Lunardi},
it follows that $v\in C^{\theta}([\eps,T],\mathbb{H}^{3,p}(\mathcal{O}))\cap
C^{1+\theta}([\eps,T],\mathbb{H}^{1,p}(\mathcal{O}))$.
\end{proof}

In the next lemma we will show that either local solution defined
in previous theorems is global or it blows up. Let us denote by
$T_{max}$ maximal existence time of solution.
\begin{lemma}\label{lem:LocalBlowUpBehaviour}
Assume that $u_0\in \mathbb{L}^p(\mathcal{O})$, $z\in
L^{\infty}([0,T],\mathbb{H}^{1,2p}(\mathcal{O})\cap \mathbb{H}^{1,p}(\mathcal{O}))$, $z\in
C^{\theta}(0,T;\mathbb{L}^p(\mathcal{O}))$, $p>d$, $\theta\in (0,1)$ and
$T_{max}<T$. Let $u\in
{\blue C}([0,T_{max});\mathbb{L}^p(\mathcal{O}))$ be a maximal
local mild solution to the Burgers equation \eqref{eqn:BurgersEqn-1}. Then
\begin{equation}
\limsup\limits_{t\nearrow
T_{max}}|u(t)|_{\mathbb{L}^p(\mathcal{O})}^2=\infty.\label{eqn:LocalBlowUpBehaviour-1}
\end{equation}
\end{lemma}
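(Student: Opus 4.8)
The plan is to prove the contrapositive by a standard restart-and-glue argument. Suppose, for contradiction, that \eqref{eqn:LocalBlowUpBehaviour-1} fails, i.e. that $\limsup_{t\nearrow T_{max}}|u(t)|_{\mathbb{L}^p(\mathcal{O})}<\infty$. First I would upgrade this to a genuine uniform bound on the whole interval of existence. By definition of $\limsup$ there are $\delta>0$ and $R<\infty$ with $|u(t)|_{\mathbb{L}^p(\mathcal{O})}\le R$ for $t\in[T_{max}-\delta,T_{max})$, while on the compact interval $[0,T_{max}-\delta]$ the map $t\mapsto u(t)$ is continuous into $\mathbb{L}^p(\mathcal{O})$ (this is the hypothesis $u\in C([0,T_{max});\mathbb{L}^p(\mathcal{O}))$) and hence bounded there. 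Consequently $M:=\sup_{t\in[0,T_{max})}|u(t)|_{\mathbb{L}^p(\mathcal{O})}<\infty$.

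Second, I would exploit that the local existence time furnished by Theorem \ref{thm:LocalExistence-2} depends on the data only through $\nu$, through $|u_0|_{\mathbb{L}^p(\mathcal{O})}$, and through $N:=|z|_{L^\infty(0,T;\mathbb{L}^{2p}(\mathcal{O})\cap\mathbb{H}^{1,p}(\mathcal{O}))}$, and that this dependence is monotone: increasing either norm can only shrink the guaranteed existence time. Note that the present hypotheses ($z\in L^\infty([0,T],\mathbb{H}^{1,2p}(\mathcal{O})\cap\mathbb{H}^{1,p}(\mathcal{O}))$) are stronger than those of Theorem \ref{thm:LocalExistence-2}, since $\mathbb{H}^{1,2p}(\mathcal{O})\hookrightarrow\mathbb{L}^{2p}(\mathcal{O})$, so that theorem applies with initial data taken at any time. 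Fix $t_0\in[0,T_{max})$ and consider the Burgers equation started at time $t_0$ with initial value $u(t_0)\in\mathbb{L}^p(\mathcal{O})$ and driving term $z(t_0+\cdot)$. Since $|u(t_0)|_{\mathbb{L}^p(\mathcal{O})}\le M$ and the time shift does not increase the relevant norm, $|z(t_0+\cdot)|_{L^\infty(0,T-t_0;\mathbb{L}^{2p}(\mathcal{O})\cap\mathbb{H}^{1,p}(\mathcal{O}))}\le N$, the monotone dependence yields a lower bound $\tau:=T_0(\nu,M,N)>0$ on the existence time that is \emph{independent of} $t_0$.

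Third I would choose $t_0$ with $T_{max}-\tau<t_0<T_{max}$. Then the restarted solution exists on $[t_0,t_0+\tau]$ with $t_0+\tau>T_{max}$, and by the uniqueness part of Theorem \ref{thm:LocalExistence-2} it coincides with $u$ on the overlap $[t_0,T_{max})$. Concatenating $u|_{[0,t_0]}$ with the restarted solution therefore produces a mild solution of \eqref{eqn:BurgersEqn-1} on $[0,t_0+\tau]\supsetneq[0,T_{max})$, contradicting the maximality of $T_{max}$. Hence \eqref{eqn:LocalBlowUpBehaviour-1} holds; the squaring there is cosmetic, since $\limsup_{t\nearrow T_{max}}|u(t)|^2_{\mathbb{L}^p(\mathcal{O})}=\infty$ if and only if $\limsup_{t\nearrow T_{max}}|u(t)|_{\mathbb{L}^p(\mathcal{O})}=\infty$.

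The step I expect to be the main obstacle is the second one: Theorem \ref{thm:LocalExistence-2} as stated only asserts the \emph{existence} of some $T_0$ depending on the indicated quantities, whereas the restart argument genuinely needs the qualitative monotonicity, or at least a uniform positive lower bound over the range $\{\,r\le M\,\}$ of admissible initial norms. This has to be read off from the proof, i.e. from the fixed-point scheme behind Theorem \ref{thm:AbstractExistence}: there the existence time is fixed by a contraction condition that forces $T_0$ to be a decreasing function of $|u_0|_{\mathbb{L}^p(\mathcal{O})}$ and of $N$ (schematically $T_0\sim c\,(|u_0|_{\mathbb{L}^p(\mathcal{O})}+N)^{-\kappa}$ for a $\kappa>0$ dictated by the exponents $a,b,c$). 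Making this dependence explicit, together with the elementary check that the time-shift $z\mapsto z(t_0+\cdot)$ does not increase the $L^\infty$-norm, is the only technical point; the gluing via uniqueness and the passage from the $\limsup$ hypothesis to a uniform bound are routine.
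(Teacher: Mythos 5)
Your proof is correct in outline but follows a genuinely different route from the paper's. The paper also argues by contradiction from a uniform bound $|u(t)|_{\mathbb{L}^p}\le R$, but instead of restarting near $T_{max}$ it shows that the solution actually \emph{converges} in $\mathbb{L}^p(\mathcal{O})$ as $t\nearrow T_{max}$: it first derives a uniform bound $\sup_{t\in[\eps,T_{max})}|u(t)|_{\mathbb{H}^{1,p}(\mathcal{O})}<\infty$ from the mild formulation and a singular Gronwall inequality (Henry, Lemma 7.1.1; this is where $p>d$ enters, via $\tfrac12+\tfrac{d}{2p}<1$), then uses this to prove uniform H\"older continuity of $t\mapsto u(t)$ in $\mathbb{L}^p$ on $[\eps,T_{max})$, whence $u(t)\to y$ for some $y\in\mathbb{L}^p(\mathcal{O})$ and the solution can be continued from $y$, contradicting maximality. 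Your restart-and-glue argument avoids all of that regularity work, at the price of needing exactly what you identify: a lower bound on the local existence time that is uniform over initial data with $|u_0|_{\mathbb{L}^p}\le M$. This is not literally contained in the statements of Theorems \ref{thm:AbstractExistence} or \ref{thm:LocalExistence-2} (which only assert existence of \emph{some} $T_0$ depending on the listed quantities), but it does follow from the fixed-point construction precisely because $p>d$ makes the problem subcritical, $a+b+c=\tfrac{d}{2p}+\tfrac12<1$, so the contraction is closed by the factor $T^{1-a-b-c}$ and $T_0$ can be taken monotone decreasing in $|u_0|_{\mathbb{L}^p}$ and in the norm of $z$; note that at the critical exponent $p=d$ the existence time depends on the profile of $u_0$ and not only on its norm, so your argument would break there, consistent with the paper leaving $p=d$ as an open question (and the paper's own proof also fails there, in the Gronwall step). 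Two small points you handle correctly and should keep explicit: the shifted forcing $z(t_0+\cdot)$ is just an external function in $F(v+z)$, so the restarted problem is of the same form, and the uniqueness class of Theorem \ref{thm:AbstractExistence} (conditions (a)--(c)) is satisfied by the restriction of $u$ to $[t_0,T_{max})$ since $u(t_0)$ already lies in $\mathbb{L}^{2p}\cap\mathbb{H}^{1,p}$ by Theorem \ref{thm:LocalExistence-2}(a),(b), so the glued function is indeed an extension of $u$.
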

\begin{question}
Can the Lemma \ref{lem:LocalBlowUpBehaviour} be strengthened to show
$\lim$ instead of $\limsup$ in the equality \eqref{eqn:LocalBlowUpBehaviour-1}?
\end{question}
\begin{question}
It would be interesting to extend the Lemma
\ref{lem:LocalBlowUpBehaviour} to the case when $p=d$.
\end{question}
\begin{proof}[Proof of Lemma \ref{lem:LocalBlowUpBehaviour}]
We will argue by contradiction. Assume that there exits $R>0$ such that
\begin{equation}
|u(t)|_{\mathbb{L}^p(\mathcal{O})}^2 \leq R, \;\;t\in (0,T_{max}].\label{eqn:LocalBlowUpBehaviour-2}
\end{equation}
Let us denote
\begin{equation}
K_1=\supl_{t\in
[0,T_{max})}|u(t)|_{\mathbb{L}^{p}(\mathcal{O})}<\infty.\label{eqn:LocalLpBound}
\end{equation}
Since $z\in L^{\infty}([0,T],\mathbb{H}^{1,p}(\mathcal{O}))$ we have the similar bound for
$v=u-z$:
\begin{equation}
\supl_{t\in
[0,T_{max})}|v(t)|_{\mathbb{L}^{p}(\mathcal{O})}<K_1'=K_1+|z|_{L^{\infty}([0,T_{max}],\mathbb{H}^{1,p}(\mathcal{O}))}.\label{eqn:LocalLpBound-1}
\end{equation}
Let us fix $\eps\in (0,T_{max})$. We will show that there exist $C,\al>0$ such that
\begin{equation}
|u(t)-u(\tau)|_{\mathbb{L}^p(\mathcal{O})}\leq C|t-\tau|^{\al},\;t,\tau\in
[\eps,T_{max}).\label{eqn:UniformContinuity-1}
\end{equation}
Then it follows from inequalities \eqref{eqn:LocalBlowUpBehaviour-2} and
\eqref{eqn:UniformContinuity-1} %and Arzella-Ascolli Theorem
that there exist $y\in \mathbb{L}^p(\mathcal{O})$ such that
\begin{equation}
\liml_{t\nearrow T_{max}}
|u(t)-y|_{\mathbb{L}^p(\mathcal{O})}=0,\label{eqn:LocalBlowUpBehaviour-3}
\end{equation}
and we have a contradiction with the definition of $T_{max}$. Thus, we
need to show the inequality \eqref{eqn:UniformContinuity-1}. We will first show that
\begin{equation}
K_2=\supl_{t\in
[\eps,T_{max})}|u(t)|_{\mathbb{H}^{1,p}(\mathcal{O})}<\infty.\label{eqn:LocalH1pBound}
\end{equation}
It is enough to show
\begin{equation}
\supl_{t\in [\eps,T_{max})}|\nabla
v(t)|_{\mathbb{L}^{p}(\mathcal{O})}<\infty.\label{eqn:LocalH1pBound-2}
\end{equation}
Indeed, the inequality \eqref{eqn:LocalH1pBound}
immediately follows from inequalities \eqref{eqn:LocalLpBound-1},
\eqref{eqn:LocalH1pBound-2} and the regularity of $z$. We have
\begin{equation}
\nabla v(t)=\nabla S_{t-\eps}v(\eps)-\intl_{\eps}^t\nabla
S_{t-s}(F(v(s)+z(s)))\,ds.
\end{equation}
Hence, for $t\in (\eps,T_{max})$ we have
\begin{eqnarray}
|\nabla v(t)|_{\mathbb{L}^{p}(\mathcal{O})}&\leq& |\nabla
S_{t-\eps}v(\eps)|_{\mathbb{L}^p(\mathcal{O})}+\intl_{\eps}^t|\nabla
S_{t-s}F(v(s)+z(s))|_{\mathbb{L}^p(\mathcal{O})}\,ds\nonumber\\
&\leq& \frac{C|v(\eps)|_{\mathbb{L}^p(\mathcal{O})}}{(t-\eps)^{1/2}}
+C\intl_{\eps}^t\frac{|S_{(t-s)/2}F(v(s)+z(s))|_{\mathbb{L}^p(\mathcal{O})}}{|t-s|^{1/2}}\,ds\nonumber\\
&\leq& \frac{C|u_0|_{\mathbb{L}^p(\mathcal{O})}}{t^{1/2}}
+C\intl_{\eps}^t\frac{|F(v(s)+z(s))|_{\mathbb{L}^{p/2}(\mathcal{O})}}{|t-s|^{1/2+d/(2p)}}\,ds\nonumber\\
&\leq& \frac{C|u_0|_{\mathbb{L}^p(\mathcal{O})}}{t^{1/2}} +C\intl_{\eps}^t
\frac{|v(s)+z(s)|_{\mathbb{L}^{p}(\mathcal{O})}}{|t-s|^{1/2+d/(2p)}}|\nabla
v(s)+\nabla z(s)|_{\mathbb{L}^{p}(\mathcal{O})}\,ds\nonumber\\
&\leq&\frac{C|u_0|_{\mathbb{L}^p(\mathcal{O})}}{t^{1/2}}+C(K_1')^2T
+CK_1'\intl_{\eps}^t\frac{|\nabla
v(s)|_{\mathbb{L}^{p}(\mathcal{O})}}{|t-s|^{1/2+d/(2p)}}\,ds<\infty,
\end{eqnarray}
where the second and third inequalities follow from the property
\eqref{eqn:HeatSemigroupEstimate-1} of the heat semigroup, fourth inequality follows from the
H\"older inequality, assumption \eqref{eqn:LocalLpBound} is used
in the fifth one and the last inequality follows from (c), Theorem
\ref{thm:LocalExistence-2}. Now if $\frac{1}{2}+\frac{d}{2p}<1$
(i.e. if $p>d$) we can use a version of the Gronwall inequality (\cite{Henry-1981},
Lemma 7.1.1, p. 188) to conclude that estimate
\eqref{eqn:LocalH1pBound-2} holds. Thus we get an estimate
\eqref{eqn:LocalH1pBound}.

Now we can turn to the proof of uniform continuity condition \eqref{eqn:UniformContinuity-1}.
Since $u=v+z$ and $z\in C^{\theta}(0,T;\mathbb{L}^p(\mathcal{O}))$ it is enough to show
\eqref{eqn:UniformContinuity-1} with $v$ instead of $u$. We have
\begin{equation}
v(t)-v(\tau)=S_{t-\tau}v(\tau)-v(\tau)-\intl_{\tau}^tS_{t-s}(F(v(s)+z(s)))\,ds.
\end{equation}
Then
\begin{eqnarray}
|v(t)-v(\tau)|_{\mathbb{L}^p(\mathcal{O})}\leq
|S_{t-\tau}v(\tau)-v(\tau)|_{\mathbb{L}^p(\mathcal{O})}\nonumber\\
+|\intl_{\tau}^tS_{t-s}F(v(s)+z(s))\,ds|_{\mathbb{L}^p(\mathcal{O})}=(I)+(II).
\end{eqnarray}
The first term can be estimated as follows, where the $\sup $ is taken over the set $\{\phi\in
C_0^{\infty}(\mathcal{O}):|\phi|_{\mathbb{L}^q(\mathcal{O})}=1\}$,
\begin{eqnarray}\nonumber
(I)&=&\supl|\lb S_{t-\tau}v(\tau)-v(\tau),\phi \rb|
=\supl|\lb v(\tau),S_{t-\tau}\phi-\phi\rb |\\
&=&\supl|\lb v(\tau),\nu\intl_{\tau}^t\triangle
S_{s-\tau}\phi
ds\rb |
=\nu\supl|\lb \nabla
v(\tau),\intl_{\tau}^t\nabla S_{s-\tau}\phi ds\rb |\nonumber\\
&\leq&\nu\supl|\nabla
v(\tau)|_{\mathbb{L}^p(\mathcal{O})}\intl_{\tau}^t|\nabla S_{s-\tau}\phi|_{\mathbb{L}^q(\mathcal{O})}\,ds\nonumber\\
&\leq&\nu\supl|\nabla
v(\tau)|_{\mathbb{L}^p(\mathcal{O})}C\intl_{\tau}^t\frac{|\phi|_{\mathbb{L}^q(\mathcal{O})}}{|s-\tau|^{1/2}}\,ds
\leq\nu CK_2|t-\tau|^{1/2}.\label{eqn:term1Est}
\end{eqnarray}
For the second term, by using  the property
\eqref{eqn:HeatSemigroupEstimate-1} of heat semigroup and the H\"older inequality we have
\begin{eqnarray}
(II)\leq\intl_{\tau}^t\frac{|F(v(s)+z(s))|_{\mathbb{L}^{p/2}(\mathcal{O})}}{|t-s|^{\frac{d}{2p}}}\,ds\leq
\intl_{\tau}^t\frac{|u(s)|_{\mathbb{L}^p(\mathcal{O})}|\nabla
u(s)|_{\mathbb{L}^p(\mathcal{O})}}{|t-s|^{\frac{d}{2p}}}\,ds\nonumber\\
\leq CK_2^2|t-\tau|^{1-\frac{d}{2p}}.\label{eqn:term3Est}
\end{eqnarray}
Finally,  the last inequality follows from estimate
\eqref{eqn:LocalH1pBound}.

Combining inequalities \eqref{eqn:term1Est} and \eqref{eqn:term3Est} we get
\eqref{eqn:UniformContinuity-1}.
\end{proof}

\section{The existence of a global solution to Stochastic Burgers equation}
In this section we continue to work pathwise. We will now state the global
existence result for the case of torus.
\begin{theorem}\label{thm:TorusGlobalExistence}
%Fix $p>d$ and $\theta\in (0,1)$. Assume that $u_0\in
%L^p(\mathbb{T}^d,\Rd)$, $z\in L^{\infty}(0,\infty;H^{2,2p}\cap
%H^{3,p})$, $z\in C_{loc}^{\theta}((0,\infty),H^{2,2p})$. Let $u\in L^{\infty}([0,T_{max});L^p)$ be a strong
%maximal local solution of Burgers equation such that $u-z\in
%C^{1,2}((0,T_1]\times\mathbb{T}^d)$, for all $T_1<T_{max}$. Then
%$T_{max}=\infty$ and we have
%\begin{equation}
%|u(t)|_{L^p(\mathbb{T}^d,\Rd)}^p\leq
%C(|u_0|_{L^p}^p+|z|_{L^{\infty}(0,T_{max};H^{2,p})}^2)e^{|\nabla
%z|_{L^1(0,T_{max};L^{\infty}(\mathbb{T}^d))}}.\label{eqn:TEnergyInequality}
%\end{equation}
Fix $p>d$. Assume that $u_0\in \mathbb{L}^p(\mathbb{T}^d)$ a.s.,
$f\in M^{2p}([0,T],\mathbb H^{3,2p}(\mathbb{T}^d))$, and
$g\in M^{2p}([0,T],\gamma(H,\mathbb H^{4,2p}(\mathbb{T}^d)))$. Then there exists a unique strong global $\mathbb{L}^p(\mathbb{T}^d)$-valued solution $u$ of the
Burgers equation. Moreover,
\begin{equation}
|u(t)|_{\mathbb{L}^p(\mathbb{T}^d)}^p\leq
C(|u_0|_{\mathbb{L}^p(\mathbb{T}^d)}^p+|z|_{L^{\infty}(0,T;\mathbb{H}^{2,p}(\mathbb{T}^d))}^2)e^{|\nabla
z|_{L^1(0,T;\mathbb{L}^{\infty}(\mathbb{T}^d))}},t\in [0,T].\label{eqn:TEnergyInequality}
\end{equation}
\end{theorem}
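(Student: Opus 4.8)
The plan is to run the program announced in the introduction pathwise in $z$: produce a maximal local strong solution from the results of Section~3, establish the a priori bound \eqref{eqn:TEnergyInequality} on its $\mathbb{L}^p$-norm, and then exclude blow-up by means of Lemma \ref{lem:LocalBlowUpBehaviour}. First I would check that the hypotheses on $f$ and $g$ feed the local theory. Applying Theorem \ref{thm:OrnsteinUhlenbeckReg} with $n=4$ to $f\in M^{2p}([0,T],\mathbb{H}^{3,2p}(\mathbb{T}^d))$ and $g\in M^{2p}([0,T],\gamma(H,\mathbb{H}^{4,2p}(\mathbb{T}^d)))$ gives, for a.e. $\omega$, a version of $z$ lying in $C^{\theta}((0,T],\mathbb{H}^{k+1,2p}(\mathbb{T}^d))$ for some $\theta\in(0,1)$ and $k\le 2$, hence in particular $z\in L^{\infty}(0,T;\mathbb{H}^{2,p}\cap\mathbb{H}^{1,2p})$ and $z\in C^{\theta}(0,T;\mathbb{L}^p)$. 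These are exactly the standing assumptions of Theorem \ref{thm:LocalExistence-2}, of Corollaries \ref{cor:ClassicalSolutionReg}, \ref{cor:ClassicalSolutionReg-2}, and of Lemma \ref{lem:LocalBlowUpBehaviour}, so for each $u_0\in\mathbb{L}^p(\mathbb{T}^d)$ we obtain a maximal local strong solution $u=v+z$ with $v(t)\in\mathbb{H}^{3,p}(\mathbb{T}^d)\hookrightarrow C^2(\mathbb{T}^d)$ for every $t\in(0,T_{max})$ (the embedding holding because $p>d$); this regularity is what makes the differential computations below legitimate.

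The core of the argument is the a priori estimate \eqref{eqn:TEnergyInequality}, which I would derive via the maximum principle applied to the scalar field $\psi=|v|^2$. Taking the inner product of \eqref{eqn:StrongBurgersEqn-1} with $2v$ and using $2v\cdot\triangle v=\triangle|v|^2-2|\nabla v|^2$ and $2v\cdot((v+z)\cdot\nabla)v=((v+z)\cdot\nabla)|v|^2$, one arrives, after discarding the dissipative term $-2\nu|\nabla v|^2\le0$ and bounding $2v\cdot((v+z)\cdot\nabla)z\le 2|v|\,|v+z|\,|\nabla z|$, at the pointwise parabolic inequality
\[
\partial_t|v|^2\le \nu\triangle|v|^2+(u\cdot\nabla)|v|^2+3|\nabla z|\,\big(|v|^2+|z|^2\big),\qquad u=v+z .
\]
This is a linear parabolic inequality for $\psi$ whose zeroth-order coefficient is $3|\nabla z|\le 3|\nabla z|_{L^{\infty}}$ and whose inhomogeneity is $3|\nabla z|\,|z|^2$. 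Passing to the $\mathbb{L}^{p/2}$-norm of $\psi$ (equivalently, testing \eqref{eqn:StrongBurgersEqn-1} with $|v|^{p-2}v$) and integrating over the torus, the zeroth-order term produces, after Gronwall's lemma, the exponential factor $e^{|\nabla z|_{L^1(L^{\infty})}}$, while the inhomogeneity produces the term $|z|_{L^{\infty}(\mathbb{H}^{2,p})}^2$ once the Sobolev embedding $\mathbb{H}^{2,p}\hookrightarrow W^{1,\infty}$ (valid because $p>d$) is used to control $|\nabla z|_{L^{\infty}}$ and $|z|_{L^{\infty}}$. Re-inserting $z$ through $u=v+z$ then yields \eqref{eqn:TEnergyInequality}.

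I expect the genuine obstacle to be the convective term $(u\cdot\nabla)|v|^2$. Because the Burgers velocity $u$ is not divergence free, integrating this term against $|v|^{p-2}$ leaves a contribution $\tfrac1p\intl_{\mathbb{T}^d}(\diver u)\,|v|^p\,dx$ that has no sign and cannot simply be dropped; this is precisely the multidimensional difficulty that is absent when $d=1$ (there $\int u'\,|u|^p\,dx=0$). Splitting $\diver u=\diver v+\diver z$ sends the $z$-part harmlessly into the coefficient $|\nabla z|_{L^{\infty}}$, but the $v$-part must be absorbed. This is where the hypothesis $p>d$ is essential and where the maximum-principle structure and the smoothing estimate \eqref{eqn:HeatSemigroupEstimate-1} of Proposition \ref{prop:HeatSemigroup} enter: the same embedding $\mathbb{H}^{1,p}\hookrightarrow\mathbb{L}^{\infty}$ and local gradient bound \eqref{eqn:LocalH1pBound} already exploited in Lemma \ref{lem:LocalBlowUpBehaviour} allow the $v$-part to be dominated by the dissipation, keeping the resulting differential inequality linear in $|v|_{\mathbb{L}^p}^p$ so that Gronwall applies. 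Getting this absorption to be clean, rather than producing a superlinear term that could only give local-in-time control, is the delicate point.

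Finally, globalization and uniqueness are routine once \eqref{eqn:TEnergyInequality} is in hand. The estimate shows $\sup_{t\in[0,T_{max})}|u(t)|_{\mathbb{L}^p(\mathbb{T}^d)}<\infty$, which is incompatible with the blow-up alternative \eqref{eqn:LocalBlowUpBehaviour-1} of Lemma \ref{lem:LocalBlowUpBehaviour} unless $T_{max}\ge T$; hence the solution extends to a global strong solution on $[0,T]$ and satisfies \eqref{eqn:TEnergyInequality}. Uniqueness on $[0,T]$ follows from the local uniqueness in Theorem \ref{thm:LocalExistence-2} by the standard continuation argument: the set of $t$ at which two solutions with the same data coincide is nonempty, closed by continuity and open by local uniqueness, hence all of $[0,T]$.
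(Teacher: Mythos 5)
Your reduction to the local theory and the final globalization/uniqueness step match the paper, but the central a priori estimate is where your proposal has a genuine gap, and it is exactly at the point you yourself flag as ``the delicate point.'' You propose to control the sign-indefinite contribution $\tfrac1p\intl_{\mathbb{T}^d}(\diver v)\,|v|^p\,dx$ by dominating it with the viscous dissipation, invoking the embedding $\mathbb{H}^{1,p}\hookrightarrow\mathbb{L}^{\infty}$ and the gradient bound \eqref{eqn:LocalH1pBound}. This cannot work as stated, for two reasons. First, it is circular: \eqref{eqn:LocalH1pBound} is derived in Lemma \ref{lem:LocalBlowUpBehaviour} \emph{under the hypothesis} \eqref{eqn:LocalBlowUpBehaviour-2} that $|u(t)|_{\mathbb{L}^p}$ stays bounded up to $T_{max}$, which is precisely what the a priori estimate is supposed to establish. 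Second, even granting an independent $\mathbb{L}^{\infty}$ bound on $v$, absorbing $\intl|v|_{\mathbb{L}^\infty}|v|^{p-1}|\diver v|$ into $\nu p\intl|v|^{p-2}|\nabla v|^2$ via Young's inequality forces a factor $1/\nu$ into the Gronwall coefficient. That is exactly what happens in the paper's proof of Theorem \ref{thm:GlobalExistence} on $\Rd$ (see the choice $\eps=\nu p/(2Q(t_0,t))$ and the resulting estimate \eqref{eqn:EnergyInequality-2}), and it yields a bound that blows up as $\nu\to 0$ --- incompatible with the $\nu$-uniform estimate \eqref{eqn:TEnergyInequality} you are asked to prove.

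The paper's actual argument on the torus avoids the $\mathbb{L}^p$ energy estimate altogether. It applies the maximum principle (Protter--Weinberger) not to $|v|^2$ but to the transformed field $\phi(t)=v\,e^{-\int_0^{t}(1+|\nabla z|_{\mathbb{L}^\infty})ds}-|z|_{L^{\infty}(0,T_{max};\mathbb{H}^{2,p})}^2$, for which the equation $v'=\nu\triangle v-(v+z)\nabla v-v\nabla z-(z\nabla)z$ turns into a parabolic inequality with a nonpositive zeroth-order coefficient and nonnegative right-hand side. This gives the pointwise bound \eqref{FeynmanKacEst} on $|v|_{\mathbb{L}^\infty}$ with constants independent of $\nu$, and then --- this is the step special to the torus, which your proposal never uses --- the finiteness of $|\mathbb{T}^d|$ gives $\mathbb{L}^{\infty}(\mathbb{T}^d)\hookrightarrow\mathbb{L}^p(\mathbb{T}^d)$, so \eqref{eqn:TEnergyInequality} follows at once and the divergence term never has to be confronted. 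To repair your proof you would need to replace the energy-absorption step by this $\mathbb{L}^{\infty}$ maximum-principle estimate (or an equivalent $\nu$-independent device); your remaining steps (local existence via Theorem \ref{thm:LocalExistence-2} and its corollaries, and the contradiction with \eqref{eqn:LocalBlowUpBehaviour-1}) are consistent with the paper.
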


\begin{proof}[Proof of Theorem \ref{thm:TorusGlobalExistence}]
Suppose $T_{max}<T$. Let $v=u-z$. It is enough to find an
estimate for $v$ in $\mathbb{L}^{\infty}(\mathbb{T}^d)$ norm. Then the estimate
of norm of $v$ in $\mathbb{L}^p(\mathbb{T}^d)$ will immediately follow. Therefore, it is enough to prove that for any fixed $0<\delta<T_{max}$,  we have
\begin{equation}
|v|_{L^{\infty}((0,T_{max};\mathbb{L}^{\infty}(\mathbb{T}^d))}\leq
(|v(\delta)|_{\mathbb{H}^{1,p}(\mathbb{T}^d)}+|z|_{L^{\infty}(0,T_{max};\mathbb{H}^{2,p}(\mathbb{T}^d))}^2)e^{T_{max}+|\nabla
z|_{L^1(0,T_{max};\mathbb{L}^{\infty}(\mathbb{T}^d))}}.\label{FeynmanKacEst}
\end{equation}
%Proof follows by applying  the Feynman-Kac formula.%, see Appendix.
%\coma{I really think that we can do this analytically.}
to prove \eqref{FeynmanKacEst} we note first that the local solution $v$ satisfies the equation
\[
v'=\nu\triangle v-(v+z)\nabla v-v\nabla z-(z\nabla)z.
\]
Let $\phi(t)=v e^{-\int\limits_0^{t}(1+|\nabla z|_{\mathbb{L}^\infty})\,ds}-|z|_{L^{\infty}(0,T_{max};\mathbb{H}^{2,p}(\mathbb{T}^d))}^2, t\in [0,T_{max})$. Then
\[
\phi'=\nu\triangle \phi-(v+z)\nabla \phi+(\phi+|z|_{L^{\infty}(0,T_{max};\mathbb{H}^{2,p}(\mathbb{T}^d))}^2)(-\nabla z-|\nabla z|_{\mathbb{L}^\infty}-1)-(z\nabla)ze^{-\int\limits_0^{\cdot}(1+|\nabla z|_{\mathbb{L}^\infty})\,ds}
\]
or, equivalently,
\begin{align*}
\nu\triangle \phi&-(v+z)\nabla \phi+\phi(-\nabla z-|\nabla z|_{\mathbb{L}^\infty}-1)-\phi'\\
&=|z|_{L^{\infty}(0,T_{max};\mathbb{H}^{2,p}(\mathbb{T}^d))}^2(\nabla z+|\nabla z|_{\mathbb{L}^\infty}+1)+(z\nabla)ze^{-\int\limits_0^{\cdot}(1+|\nabla z|_{\mathbb{L}^\infty})\,ds}\geq 0.
\end{align*}
Now \eqref{FeynmanKacEst} follows from the Maximum Principle (Theorem 7, p. 174, \cite{ProtterWeinberger}).
\end{proof}
\begin{remark}
We  remark that in the a priori estimate above we can take the limit $\nu\to 0$ under appropriate
assumptions for the noise.
\end{remark}
Now we will formulate similar results for the whole space. In this
case we do not have an embedding $\mathbb{L}^{\infty}\subset \mathbb{L}^p$. Hence
estimate in $\mathbb{L}^{\infty}$ does not imply estimate in $\mathbb{L}^p$.
Nevertheless it is possible to get the following global existence result.
\begin{theorem}\label{thm:GlobalExistence}
%Fix $p>d$ and $\theta\in (0,1)$. Assume that $u_0\in
%L^p(\Rd,\Rd)$, $z\in L^{\infty}(0,\infty;H^{2,2p}\cap H^{3,p})$,
%$z\in C_{loc}^{\theta}((0,\infty),H^{2,2p})$. Let
%$u\in L^{\infty}([0,T_{max});L^p)$ be a strong maximal local
%solution of Burgers equation such that $u-z\in
%C^{1,2}((0,T_1]\times\Rd)$, for all $T_1<T_{max}$.
%Then $T_{max}=\infty$ and we have
%\begin{eqnarray}
%|u(t)|_{L^p(\Rd,\Rd)}^p &\leq&
%\Big(|u_0|_{L^p}^p+|F(z)|_{L^1(0,t;L^p)}\Big)\nonumber\\
%&&\exp\Big\{(p+1)t|z|_{L^{\infty}(0,t;H^{2,p})}+(p-1)t\nonumber\\
%&&+\frac{2t}{\nu p}(C|u_0|_{L^p}^{2p}+|z|_{L^{\infty}(0,t;H^{2,p})}^4)e^{2t|z|_{L^{\infty}(0,t;H^{2,p})}}\Big\},\;t\geq 0.
%\end{eqnarray}
Fix $p>d$. Assume that $u_0\in \mathbb{L}^p(\Rd)$ a.s.,
$f\in M^{2p}([0,T],\mathbb H^{3,2p}(\Rd))$,
$g\in M^{2p}([0,T],\gamma(H,\mathbb H^{4,2p}(\Rd)))$. Then there exists a unique strong global $\mathbb{L}^p(\mathbb{R}^d)$-valued solution $u$ of
Burgers equation and we have
\begin{align}
|u(t)|_{\mathbb{L}^p(\Rd)}^p&\leq
(|u_0|_{\mathbb{L}^p(\Rd)}^p+|F(z)|_{L^1(0,t;\mathbb{L}^p(\Rd))})\exp\Big\{(p+1)t|z|_{L^{\infty}(0,t;\mathbb{H}^{2,p}(\Rd))}+(p-1)t\nonumber\\
&+\frac{2t}{\nu p}(C|u_0|_{\mathbb{L}^p(\Rd)}^{2p}+|z|_{L^{\infty}(0,t;\mathbb{H}^{2,p}(\Rd))}^4)e^{2t|z|_{L^{\infty}(0,t;\mathbb{H}^{2,p}(\Rd))}}\Big\},\;t\geq 0.\label{eqn:EnergyInequality-2}
%e^{C(t-t_0)(M+|\omega|_{L^{\infty}(0,t;L^{\infty}(\Rd))}+|z|_{L^{\infty}(0,T;H^{3,p})}^2(1+|u_0|_{L^p}+|z|_{H^{2,p}}^2)e^{t|z|_{H^{2,p}}})}.\nonumber
\end{align}
\end{theorem}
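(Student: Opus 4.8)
The plan is to follow the two-stage scheme of the torus case (Theorem \ref{thm:TorusGlobalExistence}) but to supplement the maximum-principle bound by a genuine $\mathbb{L}^p$ energy estimate, since on $\Rd$ an $\mathbb{L}^\infty$ bound no longer controls the $\mathbb{L}^p$ norm. First I would invoke the local theory: by Theorem \ref{thm:LocalExistence-2} together with Corollaries \ref{cor:ClassicalSolutionReg} and \ref{cor:ClassicalSolutionReg-2}, and the regularity of $z$ provided by Theorem \ref{thm:OrnsteinUhlenbeckReg}, there is a unique maximal strong solution $u=v+z$ on $[0,T_{max})$, with $v$ smooth enough on $(0,T_{max})$ to justify the manipulations below. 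By the blow-up alternative of Lemma \ref{lem:LocalBlowUpBehaviour} it then suffices to bound $\sup_{t<T_{max}}|u(t)|_{\mathbb{L}^p(\Rd)}$ a priori; the estimate \eqref{eqn:EnergyInequality-2} simultaneously rules out $T_{max}<T$ and is the asserted bound, while uniqueness is inherited from the local statement.

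In the first stage I would reproduce the maximum-principle argument of Theorem \ref{thm:TorusGlobalExistence}: with the substitution $\phi=v\,e^{-\int_0^t(1+|\nabla z|_{\mathbb{L}^\infty})\,ds}-|z|_{L^\infty(0,t;\mathbb{H}^{2,p})}^2$ used there, the equation $v'=\nu\triangle v-(v+z)\nabla v-v\nabla z-F(z)$ is turned into a parabolic inequality to which the maximum principle (Theorem 7, p. 174 of \cite{ProtterWeinberger}) applies pointwise, a local statement insensitive to whether the domain is $\mathbb{T}^d$ or $\Rd$ once one notes that $v$ is continuous and decays at infinity as an $\mathbb{L}^p$-valued function. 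This produces $\sup_{s\le t}|v(s)|_{\mathbb{L}^\infty(\Rd)}\le M(t)$, with $M(t)$ controlled, after expressing $|v(\delta)|_{\mathbb{H}^{1,p}}$ through the mild formula for $v$ and the smoothing estimates of Proposition \ref{prop:HeatSemigroup} in terms of $|u_0|_{\mathbb{L}^p(\Rd)}$, so that $M(t)^2$ matches the combination $C|u_0|_{\mathbb{L}^p(\Rd)}^{2p}+|z|_{L^\infty(0,t;\mathbb{H}^{2,p}(\Rd))}^4$ appearing inside the exponential of \eqref{eqn:EnergyInequality-2}. Here $p>d$ is used so that $\mathbb{H}^{2,p}(\Rd)\hookrightarrow C^1$ and $|\nabla z|_{\mathbb{L}^\infty}\lesssim|z|_{\mathbb{H}^{2,p}}$.

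The second stage is the $\mathbb{L}^p$ energy estimate. I would test the $v$-equation with $|v|^{p-2}v$ and integrate over $\Rd$: the viscous term contributes a coercive dissipation of order $-\nu(p-1)\int|v|^{p-2}|\nabla v|^2\,dx\le 0$; the terms $v\nabla z$ and $F(z)$ are controlled by $|\nabla z|_{\mathbb{L}^\infty}|v|_{\mathbb{L}^p}^p$ and, after Young's inequality, by $|v|_{\mathbb{L}^p}^p$ plus the source $|F(z)|_{\mathbb{L}^p}$, accounting for the prefactor $|u_0|_{\mathbb{L}^p}^p+|F(z)|_{L^1(0,t;\mathbb{L}^p)}$ and for the linear rates $(p+1)|z|_{\mathbb{H}^{2,p}}+(p-1)$. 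Integration by parts on the main convection term turns it into $\tfrac1p\int|v|^p\,\diver(v+z)\,dx$; the part with $\diver z$ is harmless, but the part $\tfrac1p\int|v|^p\,\diver v\,dx$ is the crux, since the Burgers velocity is not divergence free and this term does not vanish as it would for Navier--Stokes. I would bound it by $\tfrac1p\int|v|^p|\nabla v|\,dx$, split by Cauchy--Schwarz as $\big(\int|v|^{p+2}\big)^{1/2}\big(\int|v|^{p-2}|\nabla v|^2\big)^{1/2}$, and apply Young's inequality so that the gradient factor is absorbed into half of the viscous dissipation, leaving $\tfrac{1}{2\nu(p-1)p^2}\int|v|^{p+2}\,dx$. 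Here the $\mathbb{L}^\infty$ bound of the first stage enters decisively: $\int|v|^{p+2}\le M(t)^2|v|_{\mathbb{L}^p}^p$, so this term is $\le \tfrac{M(t)^2}{2\nu(p-1)p^2}|v|_{\mathbb{L}^p}^p$, which is precisely the origin of the factor $\tfrac{2t}{\nu p}(C|u_0|^{2p}+|z|^4)$ and of the non-uniformity in $\nu$. Assembling these bounds gives a differential inequality $\tfrac{d}{dt}|v|_{\mathbb{L}^p}^p\le a(t)|v|_{\mathbb{L}^p}^p+g(t)$ whose Gronwall integration yields \eqref{eqn:EnergyInequality-2}.

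The main obstacle is exactly this divergence term: unlike the incompressible case it carries a full derivative of the unknown that cannot be integrated away, and the only route I see to close the estimate is to trade it against the viscous dissipation at the cost of a factor $1/\nu$, having first extracted an independent $\mathbb{L}^\infty$ bound from the maximum principle. A secondary technical point is justifying the integration by parts and the differentiation of $t\mapsto|v(t)|_{\mathbb{L}^p}^p$: this rests on the higher regularity of $v$ furnished by Corollaries \ref{cor:ClassicalSolutionReg} and \ref{cor:ClassicalSolutionReg-2}, carrying the computation out on $[\delta,T_{max})$ and letting $\delta\downarrow 0$ by continuity of $v$ up to $t=0$ in $\mathbb{L}^p(\Rd)$.
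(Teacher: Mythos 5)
Your proposal is correct and follows essentially the same route as the paper: a maximum-principle (Feynman--Kac) bound on $|v|_{\mathbb{L}^\infty}$ as in the torus case, followed by an $\mathbb{L}^p$ energy estimate in which the problematic term $\int|v|^p\diver v\,dx$ is split by Young's inequality with a weight $\eps\sim\nu p/Q^2$ so that the gradient part is absorbed into half the viscous dissipation, and then Gronwall on $[t_0,t]$ with $t_0\downarrow 0$ against the blow-up alternative. The only cosmetic difference is the order in which you apply the $\mathbb{L}^\infty$ bound and Young's inequality to the divergence term, which yields the same $\nu^{-1}Q^2$ rate.
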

\begin{remark}
We note that we cannot take the limit $\nu\to 0$ in the a priori estimate above.
\end{remark}
\begin{proof}
Suppose $T_{max}<T$. Let $v=u-z$. Then $v$ satisfies system
\eqref{eqn:BurgersEquation-11}. Let us multiply $i$-th equation of
system \eqref{eqn:BurgersEquation-11} on
$v^i|v^i|^{p-2},i=1,\ldots,d$, take a sum w.r.t. $i$ and integrate
w.r.t. to time and space variable. We get for $t\geq t_0>0$
\begin{align}
|v(t)|_{\mathbb{L}^p(\Rd)}^p&+2\nu
\intl_{t_0}^t\intl_{\Rd}\suml_i|v|^{p-2}(s,x)|\nabla
v^i(s,x)|^2dxds\nonumber\\
&+(p-2)\nu
\intl_{t_0}^t\intl_{\Rd}\suml_i|v|^{p-4}(s,x)|(v^i(s,x),\nabla
v^i(s,x))|^2dxds\nonumber\\
&\leq |v(t)|_{\mathbb{L}^p(\Rd)}^p+\nu
p\intl_{t_0}^t\intl_{\Rd}\suml_i|v|^{p-2}(s,x)|\nabla
v^i(s,x)|^2dxds\nonumber\\
&\leq |v(t_0)|_{\mathbb{L}^p(\Rd)}^p+\intl_{t_0}^t\intl_{\Rd}|F(z)|^pdxds+(p-1)\intl_{t_0}^t\intl_{\Rd}|v|^pdxds\nonumber\\
&+p\intl_{t_0}^t|\nabla
z|_{\mathbb{L}^{\infty}}(s)\intl_{\Rd}|v|^pdxds+\intl_{t_0}^t\intl_{\Rd}|v|^p\diver(v+z)\,dxds.\label{eqn:EnergyIneq_aux_1}
\end{align}
The last term in the inequality \eqref{eqn:EnergyIneq_aux_1} can be estimated from above as follows
\begin{align}
\intl_{t_0}^t\intl_{\Rd}|v|^p\diver(v+z)\,dx\,ds&\leq \intl_{t_0}^t|\diver z|_{L^{\infty}(\Rd)}|v(s)|_{\mathbb{L}^p(\Rd)}^pds\nonumber\\
&+\intl_{t_0}^t|v|_{\mathbb{L}^{\infty}([t_0,t]\times\Rd)}\intl_{\Rd}|v|^{p-1}|\diver v| dxds\nonumber\\
&= \intl_{t_0}^t|\diver z|_{L^{\infty}(\Rd)}|v(s)|_{\mathbb{L}^p(\Rd)}^pds+|v|_{\mathbb{L}^{\infty}([t_0,t]\times\Rd)}
\intl_{t_0}^t\intl_{\Rd}|v|^{\frac{p}{2}}|v|^{\frac{p}{2}-1}|\diver v| dxds\nonumber\\
&\leq \intl_{t_0}^t|\diver z|_{L^{\infty}(\Rd)}|v(s)|_{\mathbb{L}^p(\Rd)}^pds+|v|_{\mathbb{L}^{\infty}([t_0,t]\times\Rd)}\nonumber\\
&\Big(\frac{1}{4\eps}\intl_{t_0}^t\intl_{\Rd}|v|^pdxds+\eps\intl_{t_0}^t\intl_{\Rd}|v|^{p-2}|\diver v|^2 dxds \Big),\,\,\eps>0,\label{eqn:KeyIneq_1}
\end{align}
where the last inequality follows from the Young inequality.
We can estimate $\mathbb{L}^{\infty}$ norm of $v$ by Feynman-Kac formula in the same way as in the torus case above i.e. we have
%We have the following Lemma
%\begin{lemma}\label{lem:LinftyEst_2}
%Under the assumptions of Theorem \ref{thm:GlobalExistence} we have
\begin{equation}
|v|_{L^{\infty}([\delta,t];\mathbb{L}^{\infty}(\Rd))}\leq
(|v(\delta)|_{\mathbb{H}^{1,p}(\Rd)}+|z|_{L^{\infty}(0,t;\mathbb{H}^{2,p}(\Rd))}^2)e^{|\nabla
z|_{L^1(0,t;\mathbb{L}^{\infty}(\Rd))}},\label{FeynmanKacEst_2}
\end{equation}
for any fixed $0<\delta\leq t<T_{max}$.
%\end{lemma}
%\begin{proof}[Proof of Lemma \ref{lem:LinftyEst_2}:]
%Proof follows immediately from the Feynman-Kac formula.%, see also
%%Appendix.
%\end{proof}
For $t\geq t_0>0$ denote
\[
Q(t_0,t)=(|v(t_0)|_{\mathbb{H}^{1,p}(\Rd)}+|z|_{L^{\infty}(0,t;\mathbb{H}^{2,p}(\Rd))}^2)e^{|\nabla
z|_{L^1(0,t;\mathbb{L}^{\infty}(\Rd))}}.
\]
Combining inequalities \eqref{eqn:EnergyIneq_aux_1}, \eqref{eqn:KeyIneq_1} and \eqref{FeynmanKacEst_2} we infer
that
\begin{align}
|v(t)|_{\mathbb{L}^p(\Rd)}^p&+\nu
p\intl_{t_0}^t\intl_{\Rd}|v|^{p-2}(s,x)|\nabla
v(s,x)|^2dxds\nonumber\\
&\leq |v(t_0)|_{\mathbb{L}^p(\Rd)}^p+\intl_{t_0}^t\intl_{\Rd}|F(z)|^pdxds+(p-1)\intl_{t_0}^t|v(s)|_{\mathbb{L}^p(\Rd)}^pds\nonumber\\
&+(p+1)\intl_{t_0}^t|\nabla
z(s)|_{L^{\infty}}|v(s)|_{\mathbb{L}^p(\Rd)}^pds+\eps Q(t_0,t)\intl_{t_0}^t\intl_{\Rd}|v|^{p-2}|\diver v|^2 dxds\nonumber\\
&+\frac{Q(t_0,t)}{4\eps}\intl_{t_0}^t|v(s)|_{\mathbb{L}^p(\Rd)}^pds,\,\,\eps>0
\end{align}
Put $\eps=\frac{\nu p}{2Q(t_0,t)}$. Then
\begin{align}
|v(t)|_{\mathbb{L}^p(\Rd)}^p&+\frac{\nu
p}{2}\intl_{t_0}^t\intl_{\Rd}|v|^{p-2}(s,x)|\nabla
v(s,x)|^2dxds\nonumber\\
&\leq |v(t_0)|_{\mathbb{L}^p(\Rd)}^p+\intl_{t_0}^t\intl_{\Rd}|F(z)|^pdxds\nonumber\\
&+\intl_{t_0}^t\Big((p+1)|\nabla
z(s)|_{\mathbb{L}^{\infty}}+(p-1)+\frac{Q(t_0,t)^2}{2\nu p}\Big)|v(s)|_{\mathbb{L}^p(\Rd)}^pds.
\end{align}
Now we apply the Gronwall Lemma to conclude that
\begin{align}
|v(t)|_{\mathbb{L}^p(\Rd)}^p&+\frac{\nu
p}{2}\intl_{t_0}^t\intl_{\Rd}|v|^{p-2}(s,x)|\nabla
v(s,x)|^2dxds\nonumber\\
&\leq \left(|v(t_0)|_{\mathbb{L}^p(\Rd)}^p+\intl_{t_0}^t\intl_{\Rd}|F(z)|^pdxds\right)\nonumber\\
&\cdot\exp\Big\{(p+1)|\nabla
z|_{L^1([t_0,t],\mathbb{L}^{\infty})}+(p-1)t+\frac{tQ(t_0,t)^2}{2\nu p}\Big\}.
\end{align}
Taking the limit $t\to T_{max}$ and $t_0\to 0$ we get contradiction.
\end{proof}
%\begin{corollary}
 %Suppose also that a.a.
%\begin{eqnarray}
%\omega=\curl u\in
%L^{\infty}(0,T_{max};L^{\infty}(\Rd)),\label{eqn:BoundedVorticityCond-2}
%\end{eqnarray}
%and a.a. for any $\delta>0$ there exists $0\leq t_{\delta}<\delta$
%such that $\diver u$ satisfies following growth condition:
%\begin{eqnarray}
%\exists c>0\,\liminf_{R\to\infty}
%e^{-cR^2}[\maxl_{|x|=R,t\in[t_{\delta},T]}\diver u(x,t)]\leq
%0,\forall T<T_{max},\label{eqn:Phragmen-LindelofCond-2} \\
%\exists 0<t_0<T\supl_x\diver u(t_0,x)\leq
%M<\infty.\label{eqn:BoundedConvergenceCond-2}
%\end{eqnarray}
%\end{corollary}
In the next Theorem we
will show that if a Beale-Kato-Majda type condition is satisfied i.e. vorticity is bounded then
%and certain regularity
%assumption at $\infty$ (condition
%\eqref{eqn:Phragmen-LindelofCond}) is satisfied
then %there exist
%unique global solution and
 a priori estimate holds uniformly in $\nu \geq 0$.% We would like to note that in 2D case there is a natural $L^1$ bound for vorticity
%of Burgers equation. Thus there is a gap between
\begin{theorem}\label{thm:GlobalExistence_2}
%Fix $p>d$ and $\theta\in (0,1)$. Assume that $u_0\in
%L^p(\Rd,\Rd)$, $z\in L^{\infty}(0,\infty;H^{2,2p}\cap H^{3,p})$,
%$z\in C_{loc}^{\theta}((0,\infty),H^{2,2p})$. Let
%$u\in L^{\infty}([0,T_{max});L^p)$ be a strong maximal local
%solution of Burgers equation such that $u-z\in
%C^{1,2}((0,T_1]\times\Rd)$, for all $T_1<T_{max}$. Assume also
%that
%\begin{equation}
%\dela{\omega=}\curl u\in
%L^{\infty}(0,T_{max};L^{\infty}(\Rd)),\label{eqn:BoundedVorticityCond}
%\end{equation}
%and
%%for any $\delta>0$ there exists $0\leq t_{\delta}<\delta$ such
%%that $\diver u$ satisfies following growth condition:
%\begin{equation}
%%\exists c>0\,\liminf_{R\to\infty}
%%e^{-cR^2}[\maxl_{|x|=R,t\in[t_{\delta},T]}\diver u(x,t)]\leq
%%0,\forall T<T_{max},\label{eqn:Phragmen-LindelofCond} \\
%\exists 0<t_0<T\supl_x\diver u(t_0,x)\leq
%M<\infty.\label{eqn:BoundedConvergenceCond}
%\end{equation}
%Then $T_{max}=\infty$ and we have a.s.
%\begin{eqnarray}
%|u(t)|_{L^p(\Rd,\Rd)}^p&\leq&
%C(|u_0|_{L^p}^p+|F(z)|_{L^1(0,t;L^p)})\nonumber\\
%&&\exp\left\{C(t-t_0)(M+|\curl u|_{L^{\infty}((0,t]\times\Rd)}\right.\nonumber\\
%&&\left.
%+|z|_{L^{\infty}(0,T;H^{3,p})}^2(1+|u_0|_{L^p}+|z|_{H^{2,p}}^2)e^{t|z|_{H^{2,p}}})\right\}.\label{eqn:EnergyInequality}
%\end{eqnarray}
Fix $p>d$ and $\theta\in (0,1)$. Assume that $u_0\in
\mathbb{L}^p(\Rd)$,
$f\in M^{2p}([0,T],\mathbb H^{3,2p}(\Rd))$,
$g\in M^{2p}([0,T],\gamma(H,\mathbb H^{4,2p}(\Rd)))$. Let $u\in
L^{\infty}([0,T_{max});\mathbb{L}^p(\Rd))$ be a strong maximal local solution of
Burgers equation. Assume also
that a.s.
\begin{equation}
\dela{\omega=}\curl u\in
L^{\infty}(0,T_{max};\mathbb{L}^{\infty}(\Rd)),\label{eqn:BoundedVorticityCond}
\end{equation}
and %for any $\delta>0$ there exists $0\leq t_{\delta}<\delta$ such
%that $\diver u$ satisfies following growth condition:
there exists $t_0\in (0,T)$ such that
\begin{equation}
%\exists c>0\,\liminf_{R\to\infty}
%e^{-cR^2}[\maxl_{|x|=R,t\in[t_{\delta},T]}\diver u(x,t)]\leq
%0,\forall T<T_{max},\label{eqn:Phragmen-LindelofCond} \\
 \diver u(t_0,\cdot) \in \mathbb{L}^{\infty}(\Rd).\label{eqn:BoundedConvergenceCond}
\end{equation}
Then $T_{max}=T$ and we have a.s.
\begin{eqnarray}
|u(t)|_{\mathbb{L}^p(\Rd)}^p&\leq&
C(|u_0|_{\mathbb{L}^p(\Rd)}^p+|F(z)|_{L^1(0,t;\mathbb{L}^p(\Rd))})\nonumber\\
&&\exp\left\{C(t-t_0)(  |\diver u(t_0,\cdot) |_{\mathbb{L}^{\infty}(\Rd)} +|\curl u|_{\mathbb{L}^{\infty}((0,t]\times\Rd)}\right.\nonumber\\
&&\left.
+|z|_{L^{\infty}(0,T;\mathbb{H}^{3,p}(\Rd))}^2(1+|u_0|_{L^p}+|z|_{\mathbb{H}^{2,p}(\Rd)}^2)e^{t|z|_{\mathbb{H}^{2,p}(\Rd)}})\right\}.\label{eqn:EnergyInequality}
\end{eqnarray}
\end{theorem}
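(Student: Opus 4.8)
The plan is to argue by contradiction: assuming $T_{max}<T$, I will produce a bound on $|u(t)|_{\mathbb{L}^p(\Rd)}$ that stays finite as $t\nearrow T_{max}$, and in fact is uniform in $\nu$, contradicting the blow-up alternative of Lemma \ref{lem:LocalBlowUpBehaviour}. Setting $v=u-z$ and running exactly the $\mathbb{L}^p$ energy computation of Theorem \ref{thm:GlobalExistence} — multiply the $i$-th component of \eqref{eqn:BurgersEquation-11} by $v^i|v^i|^{p-2}$, sum over $i$ and integrate — one is left, after integrating the transport part by parts, with the single obstructive term $\int_{\Rd}|v|^p\,\diver u\,dx$ on the right-hand side of \eqref{eqn:EnergyIneq_aux_1}. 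In Theorem \ref{thm:GlobalExistence} this term was tamed by the Young inequality and partially absorbed into the dissipation $\int|v|^{p-2}|\nabla v|^2$, the choice $\eps=\nu p/(2Q)$ being precisely what destroys uniformity in $\nu$. The whole point here is to avoid that absorption and instead control $\diver u$ directly in $\mathbb{L}^\infty$.

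The key observation is that for the energy estimate only a one-sided bound is needed: since $\int_{\Rd}|v|^p\,\diver u\,dx\le\big(\sup_x\diver u\big)_+\,|v(t)|_{\mathbb{L}^p(\Rd)}^p$, it suffices to bound $\sup_x\diver u$ from above, uniformly in $\nu$. To obtain this I would write the scalar equation satisfied by $\sigma:=\diver v$, a genuine noise-free PDE since $v$ solves \eqref{eqn:BurgersEquation-11} classically by Corollary \ref{cor:ClassicalSolutionReg-2}. Taking the divergence of \eqref{eqn:BurgersEquation-11} and using $\diver\big((u\cdot\nabla)u\big)=(u\cdot\nabla)\diver u+\sum_{i,j}(\partial_i u^j)(\partial_j u^i)$ gives
\[
\partial_t\sigma=\nu\triangle\sigma-(u\cdot\nabla)\sigma-(u\cdot\nabla)\diver z-\sum_{i,j}(\partial_i u^j)(\partial_j u^i).
\]
Splitting $\nabla u$ into its symmetric and antisymmetric parts $S$ and $A$ turns the quadratic term into $\sum_{i,j}(\partial_i u^j)(\partial_j u^i)=|S|^2-|A|^2$, where $|A|^2=\tfrac12|\curl u|^2$. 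Here is the crux: the uncontrolled term $|S|^2$ enters the equation for $\sigma$ with the favourable sign $-|S|^2\le0$, so it may simply be discarded when seeking an \emph{upper} bound, while the only surviving source is $+\tfrac12|\curl u|^2$, which is exactly what the Beale--Kato--Majda hypothesis \eqref{eqn:BoundedVorticityCond} controls.

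Next I would apply the parabolic maximum principle (Theorem 7, p.~174 of \cite{ProtterWeinberger}, as already used in Theorem \ref{thm:TorusGlobalExistence}) to the differential inequality $\partial_t\sigma\le\nu\triangle\sigma-(u\cdot\nabla)\sigma-(u\cdot\nabla)\diver z+\tfrac12|\curl u|^2$ on $[t_0,T_{max})$, starting from the time $t_0$ at which $\diver u(t_0,\cdot)\in\mathbb{L}^\infty(\Rd)$ by \eqref{eqn:BoundedConvergenceCond}. This yields
\[
\sup_x\sigma(t)\le\sup_x\sigma(t_0)+\intl_{t_0}^t\Big(|u(s)|_{\mathbb{L}^\infty}\,|\nabla\diver z(s)|_{\mathbb{L}^\infty}+\tfrac12|\curl u(s)|_{\mathbb{L}^\infty}^2\Big)\,ds,
\]
and hence a bound on $\sup_x\diver u=\sup_x(\sigma+\diver z)$ that is uniform in $\nu$: the transport coefficient $\nabla\diver z$ is controlled by $|z|_{\mathbb{H}^{3,p}(\Rd)}$ via Sobolev embedding ($p>d$), the vorticity by the hypothesis, and $|u|_{\mathbb{L}^\infty}$ by the $\nu$-free Feynman--Kac estimate \eqref{FeynmanKacEst_2} together with the smoothing bound on $|v(\delta)|_{\mathbb{H}^{1,p}(\Rd)}$ in terms of $|u_0|_{\mathbb{L}^p(\Rd)}$. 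Feeding $\sup_x\diver u$ into \eqref{eqn:EnergyIneq_aux_1} in place of the absorbed term and collecting it with the remaining contributions linear in $|v|^p_{\mathbb{L}^p(\Rd)}$ (the $(p-1)$ and $(p+1)|\nabla z|_{\mathbb{L}^\infty}$ terms), Gronwall's lemma produces an estimate of the stated form \eqref{eqn:EnergyInequality}, the prefactor $|v(t_0)|_{\mathbb{L}^p}$ being re-expressed through $|u_0|_{\mathbb{L}^p(\Rd)}$ and $|F(z)|_{L^1}$ and the exponent assembled from $|\diver u(t_0,\cdot)|_{\mathbb{L}^\infty}$, $|\curl u|_{\mathbb{L}^\infty}$ and the $z$-dependent Feynman--Kac factor. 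Since this bound is finite and uniform up to $T_{max}$, $\limsup_{t\nearrow T_{max}}|u(t)|_{\mathbb{L}^p(\Rd)}<\infty$, contradicting Lemma \ref{lem:LocalBlowUpBehaviour}; therefore $T_{max}=T$.

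I would expect the main obstacle to be the rigorous justification of the maximum principle step on the unbounded domain $\Rd$ — one must know that $\sigma$ and $\diver z$ decay at spatial infinity (which follows from the high Sobolev regularity of $v$ and $z$ granted by Corollary \ref{cor:ClassicalSolutionReg-2} and the assumptions on $f$ and $g$) so that the relevant suprema are attained and the comparison argument is legitimate — and, conceptually, the recognition that one needs only the one-sided control $\sup_x\diver u$, since a two-sided $\mathbb{L}^\infty$ bound on $\diver u$ would require taming $|S|^2$ from above, which the hypotheses do not provide.
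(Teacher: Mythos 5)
The paper gives no proof of this theorem beyond the remark that it ``follows the lines of Theorem 2.2 in \cite{Goldys-Neklyudov}'', and your reconstruction is exactly that argument transplanted to the stochastic setting: take the divergence of \eqref{eqn:BurgersEquation-11}, split $\nabla u$ into symmetric and antisymmetric parts so that only $+\tfrac12|\curl u|^2$ survives as a source for an upper bound on $\diver u$, apply the maximum principle from $t_0$ onward, and feed the resulting one-sided, $\nu$-independent bound on $\diver u$ into the $\mathbb{L}^p$ energy identity of Theorem \ref{thm:GlobalExistence} before concluding by Gronwall and the blow-up alternative. Your proposal is correct and takes essentially the same approach as the (cited) proof, including the correct identification of the one-sided nature of the required divergence bound and of the technical point about the maximum principle on the unbounded domain.
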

\begin{proof}
The proof follows the lines of Theorem 2.2 in \cite{Goldys-Neklyudov} and is omitted.

\end{proof}
\begin{remark}
It is possible to construct random dynamical system corresponding to the solution of stochastic Burgers equation following the argument of the first name auhour and Yuhong Li  \cite{BrzYu}.
\end{remark}
\section{Gradient case}
In this section we will consider a particular case when the initial condition and force are potential. %and $\mathcal{O}=\mathbb{T}^d$.
\begin{corollary}
Fix $p>d$. Assume that $\psi_0\in H^{1,p}(\mathcal{O})$ a.s.,
$U\in M^{2p}([0,T],H^{4,2p}(\mathcal{O}))$,
$V\in M^{2p}([0,T],\gamma(H, H^{5,2p}(\mathcal{O})))$. Then there
exists unique global solution $u\in
C(0,T;\mathbb{L}^p(\mathcal{O}))$ a.s. of equation
\[\left\{
\begin{aligned}
du &+ (u\nabla)udt=(\nu\triangle u+\nabla U)\,dt+\nabla Vdw_t\\
u(0) &= \nabla\psi_0.
\end{aligned}
\right.\]
Furthermore, if
$\psi_0,U,V$ are non random then for $\mathcal{O}=\mathbb{T}^d$ we have
\[
\mathbb{E}\supl_{s\in [0,t]}|u(s)|_{\mathbb{L}^p(\mathcal{O})}^p\leq C(|\psi_0|_{H^{1,p}},|U|_{L^1([0,t],H^{2,p})},|V|_{L^2([0,t],\gamma(H,H^{3,p}))}),
\]
and for $\mathcal{O}=\mathbb{R}^d$ we have
\[
\mathbb{E}\log(1+\supl_{s\in [0,t]}|u(s)|_{\mathbb{L}^p(\mathcal{O})}^p)\leq C(|\psi_0|_{H^{1,p}},|U|_{L^1([0,t],H^{2,p})},|V|_{L^2([0,t],\gamma(H,H^{3,p}))}),
\]
\end{corollary}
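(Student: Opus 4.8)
The plan is to read off both assertions from the general theory developed in the preceding sections by exploiting the fact that potentiality of the data is propagated by the flow. First I would verify that the data satisfy the hypotheses of Theorems~\ref{thm:TorusGlobalExistence} and~\ref{thm:GlobalExistence_2}: since $\psi_0\in H^{1,p}(\mathcal{O})$ we have $u_0=\nabla\psi_0\in\mathbb{L}^p(\mathcal{O})$, and differentiating once turns $U\in M^{2p}([0,T],H^{4,2p})$ and $V\in M^{2p}([0,T],\gamma(H,H^{5,2p}))$ into $f=\nabla U\in M^{2p}([0,T],\mathbb{H}^{3,2p})$ and $g=\nabla V\in M^{2p}([0,T],\gamma(H,\mathbb{H}^{4,2p}))$, which are precisely the standing assumptions there. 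On $\mathcal{O}=\mathbb{T}^d$ Theorem~\ref{thm:TorusGlobalExistence} then supplies at once the unique global solution and the uniform bound~\eqref{eqn:TEnergyInequality}, so the existence part is complete in that case.

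On $\mathcal{O}=\Rd$ I would take the local strong solution furnished by Theorem~\ref{thm:LocalExistence-2} and Corollaries~\ref{cor:ClassicalSolutionReg}--\ref{cor:ClassicalSolutionReg-2} and upgrade it to a global one via Theorem~\ref{thm:GlobalExistence_2}, whose two structural hypotheses I check using potentiality. Writing $u=\nabla\psi$, the nonlinearity collapses to $(u\cdot\nabla)u=\frac{1}{2}\nabla|\nabla\psi|^2$, so the whole system is the spatial gradient of the scalar viscous Hamilton--Jacobi equation $d\psi=(\nu\Delta\psi-\frac{1}{2}|\nabla\psi|^2+U)\,dt+V\,dw_t$; building $\psi$ and setting $u=\nabla\psi$ yields a solution, which by the uniqueness already established is \emph{the} solution, and for it $\curl u=\curl\nabla\psi\equiv0$, so the Beale--Kato--Majda condition~\eqref{eqn:BoundedVorticityCond} holds trivially. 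For the divergence condition~\eqref{eqn:BoundedConvergenceCond} I would use parabolic smoothing: by Corollary~\ref{cor:ClassicalSolutionReg} and the regularity of the Ornstein--Uhlenbeck process from Theorem~\ref{thm:OrnsteinUhlenbeckReg}, at any fixed $t_0>0$ one has $v(t_0),z(t_0)\in\mathbb{H}^{2,p}(\Rd)$, whence $\diver u(t_0,\cdot)\in\mathbb{H}^{1,p}(\Rd)\hookrightarrow\mathbb{L}^\infty(\Rd)$ because $p>d$. Theorem~\ref{thm:GlobalExistence_2} then gives $T_{max}=T$ and the bound~\eqref{eqn:EnergyInequality}, in which the vorticity contribution vanishes.

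For the deterministic-data moment bounds the contrast between the two domains is an integrability issue dictated by the shape of the a priori estimates. On $\mathbb{T}^d$ the right-hand side of~\eqref{eqn:TEnergyInequality} involves the Gaussian process $z$ only through a \emph{single} exponential $e^{|\nabla z|_{L^1(0,t;\mathbb{L}^\infty)}}$ times a polynomial in $|z|_{\mathbb{H}^{2,p}}$; this bound is already independent of $s$, so passing to $\sup_{s\le t}$ is free, and Fernique's theorem together with Cauchy--Schwarz renders the expectation finite, giving $\mathbb{E}\sup_{s\le t}|u(s)|_{\mathbb{L}^p}^p<\infty$ with the stated dependence on the data. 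On $\Rd$ the bound~\eqref{eqn:EnergyInequality} is instead a \emph{double} exponential in the norms of $z$, of the form $\exp\{C(\cdots)\,e^{t|z|_{\mathbb{H}^{2,p}}}\}$, and since $\mathbb{E}\,e^{e^{c|z|}}=+\infty$ under a Gaussian law the expectation of $|u|^p$ itself diverges; taking a logarithm linearises the outer exponential and leaves only $e^{t|z|_{\mathbb{H}^{2,p}}}$, which is integrable by Fernique. This is exactly why the $\Rd$ estimate is phrased for $\mathbb{E}\log(1+\sup_{s\le t}|u(s)|_{\mathbb{L}^p}^p)$, and the constant then depends only on the lower-order covariance norms $|\psi_0|_{H^{1,p}}$, $|U|_{L^1([0,t],H^{2,p})}$, $|V|_{L^2([0,t],\gamma(H,H^{3,p}))}$.

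The step I expect to be most delicate is the rigorous propagation of potentiality, that is, confirming that the solution produced by the abstract fixed-point scheme really coincides with $\nabla\psi$ so that $\curl u\equiv0$; the clean route is to solve the scalar Hamilton--Jacobi equation, set $u=\nabla\psi$, and identify it with the unique solution of the general theory, which forces one to check that $\nabla\psi$ carries the regularity demanded by Definition~\ref{def-2.3} and by Theorem~\ref{thm:GlobalExistence_2}. Closely tied to this is the need to bound $\mathbb{E}|\diver u(t_0,\cdot)|_{\mathbb{L}^\infty}$, which requires pushing the global $\mathbb{L}^p$ a priori estimate up to second order in space uniformly on $[t_0,T]$ via~\eqref{eqn:HeatSemigroupEstimate-1}. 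Once these structural and regularity facts are secured, the Fernique and Cauchy--Schwarz moment computations are routine.
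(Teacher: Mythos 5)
Your proposal is correct and follows essentially the same route as the paper, whose own proof is a one-line reduction to Theorem \ref{thm:TorusGlobalExistence} and Theorem \ref{thm:GlobalExistence_2} for existence and to the a priori estimates \eqref{eqn:TEnergyInequality}, \eqref{eqn:EnergyInequality} combined with Fernique's theorem for the moment bounds. Your additional detail --- propagating potentiality to kill $\curl u$, checking $\diver u(t_0)\in\mathbb{L}^\infty$ by parabolic smoothing, and explaining the single- versus double-exponential dichotomy that forces the logarithm on $\Rd$ --- is exactly what the paper leaves implicit.
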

\begin{proof}
The first part follows immediately from Corollary \ref{thm:TorusGlobalExistence} and Theorem \ref{thm:GlobalExistence_2}. The second part follows from estimates \eqref{eqn:TEnergyInequality}, \eqref{eqn:EnergyInequality}
and Fernique Theorem. Indeed, if $U,V$ are non random then Ornstein-Uhlenbeck process $z$ has gaussian distribution in $L^2([0,T],\mathbb{H}^{1,p}(\mathcal{O}))\subset L^1([0,T],\mathbb{L}^{\infty}(\mathcal{O}))$.
\end{proof}
Consequently, since $u$ is a gradient of a certain function provided the initial condition and the force are gradients we can deduce the following corollary.
\begin{corollary}\label{cor:HamJacobiEquation}
Fix $p>d$ and $\nu>0$.  Assume that $\psi_0\in H^{1,p}(\mathcal{O})$ a.s.,
$U\in M^{2p}([0,T], H^{4,2p}(\mathcal{O}))$,
$V\in M^{2p}([0,T],\gamma(H, H^{5,2p}(\mathcal{O})))$. Then there
exists unique global solution $\psi^{\nu}\in
C(0,T;H^{1,p}(\mathcal{O}))$ a.s. of the equation
\begin{equation}\label{eqn:KDP}
\left\{
\begin{aligned}
d\psi^{\nu} &+ |\nabla\psi^{\nu}|^2dt=(\nu\triangle \psi^{\nu}+U)\,dt+VdW_t\\
\psi^{\nu}(0) &= \psi_0.
\end{aligned}
\right.
\end{equation}
Furthermore, if
$\psi_0,U,V$ are non random then for $\mathcal{O}=\mathbb{T}^d$ we have
\begin{equation}\label{eqn:AprioriEstim}
\mathbb{E}\supl_{s\in [0,t]}|\psi^{\nu}(s)|_{H^{1,p}(\mathcal{O})}^p\leq C(|\psi_0|_{H^{1,p}(\mathcal{O})},|U|_{L^1([0,t],H^{2,p}(\mathcal{O}))},|V|_{L^2([0,t],\gamma(H,H^{2,p}(\mathcal{O})))}),\;t\geq 0.
\end{equation}
and for $\mathcal{O}=\mathbb{R}^d$ we have
\begin{equation}\label{eqn:AprioriEstimRd}
\mathbb{E}\log(1+\supl_{s\in [0,t]}|\psi^{\nu}(s)|_{H^{1,p}(\mathcal{O})}^p)\leq C(|\psi_0|_{H^{1,p}(\mathcal{O})},|U|_{L^1([0,t],H^{2,p}(\mathcal{O}))},|V|_{L^2([0,t],\gamma(H,H^{2,p}(\mathcal{O})))}),\;t\geq 0.
\end{equation}
\end{corollary}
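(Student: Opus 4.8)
The plan is to reduce \eqref{eqn:KDP} to the vector Burgers equation already solved in the previous sections by means of the substitution $u=\nabla\psi^{\nu}$. Taking the gradient of \eqref{eqn:KDP} turns it into the Burgers system of the preceding corollary, with gradient data $u(0)=\nabla\psi_0$ and gradient forcing $\nabla U$, $\nabla V$ (for a gradient field $\nabla|\nabla\psi|^{2}$ is proportional to $(\nabla\psi\cdot\nabla)\nabla\psi$). Conversely, since the datum and the forcing are gradients, the unique global solution $u$ furnished by Theorems \ref{thm:TorusGlobalExistence} and \ref{thm:GlobalExistence_2} remains a gradient field for all $t$: its vorticity $\curl u$ solves a linear transport--diffusion equation with zero initial value and therefore stays zero. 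First I would invoke these results to obtain the unique $u\in C(0,T;\mathbb{L}^p(\mathcal O))$, together with the extra regularity $u(t)\in\mathbb{H}^{1,2p}(\mathcal O)$ for $t>0$ coming from Corollaries \ref{cor:ClassicalSolutionReg}--\ref{cor:ClassicalSolutionReg-2}.

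Given $u$, the scalar potential is recovered by freezing the nonlinearity. I would solve the now \emph{linear} equation
\[
d\psi=\left(\nu\triangle\psi+U-|u|^{2}\right)dt+V\,dW,\qquad \psi(0)=\psi_0,
\]
whose inhomogeneity $U-|u|^{2}$ is progressively measurable and lies in $M^{p}([0,T],\mathbb{L}^{p}(\mathcal O))$ (since $u\in\mathbb{L}^{2p}$ gives $|u|^{2}\in\mathbb{L}^{p}$), while $V\in M^{2p}([0,T],\gamma(H,H^{5,2p}(\mathcal O)))$ is amply regular. Theorem \ref{thm:OrnsteinUhlenbeckReg}, applied with $n=1$, then produces a unique $\psi^{\nu}\in C(0,T;H^{1,p}(\mathcal O))$. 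Taking the gradient of this linear equation, the difference $w=\nabla\psi^{\nu}-u$ satisfies a homogeneous linear parabolic problem with $w(0)=0$ and no forcing, so $w\equiv 0$; hence $\nabla\psi^{\nu}=u$, the frozen term equals $|\nabla\psi^{\nu}|^{2}$, and $\psi^{\nu}$ genuinely solves \eqref{eqn:KDP}. Uniqueness follows because any two solutions share the same gradient $u$ (by uniqueness for Burgers) and thus differ by a spatially constant process, which the equation itself then pins down.

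For the a priori bounds I would use $|\psi^{\nu}|_{H^{1,p}}\simeq|\psi^{\nu}|_{L^p}+|u|_{\mathbb{L}^p}$ and treat the two domains separately. On $\mathbb{T}^d$ the term $|u|_{\mathbb{L}^p}$ is controlled by \eqref{eqn:TEnergyInequality}; splitting $\psi^{\nu}$ into its spatial mean and its zero-mean part, the latter is bounded by $|u|_{\mathbb{L}^p}$ via the Poincar\'{e} inequality, while the mean satisfies the explicit scalar (stochastic) equation obtained by averaging \eqref{eqn:KDP}. This yields \eqref{eqn:AprioriEstim}, the Gaussian moments of $z$ entering through \eqref{eqn:TEnergyInequality} being integrable by Fernique's theorem. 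On $\Rd$ the decisive simplification is that the Beale-Kato-Majda condition \eqref{eqn:BoundedVorticityCond} is automatic, since $\curl u=\curl\nabla\psi^{\nu}=0$; moreover for any $t_0>0$ the smoothing of Corollaries \ref{cor:ClassicalSolutionReg}--\ref{cor:ClassicalSolutionReg-2} together with Sobolev embedding gives $\diver u(t_0,\cdot)\in\mathbb{L}^{\infty}(\Rd)$, so Theorem \ref{thm:GlobalExistence_2} applies and produces \eqref{eqn:EnergyInequality}. Taking the logarithm cancels the outer exponential in \eqref{eqn:EnergyInequality}, leaving an expression polynomial in the $\mathbb{H}^{k,p}$-norms of $z$ times $e^{t|z|_{\mathbb{H}^{2,p}(\Rd)}}$, whose expectation is finite again by Fernique's theorem, giving \eqref{eqn:AprioriEstimRd}. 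The step I expect to be the main obstacle is precisely the reconstruction on $\Rd$: without the torus mean/Poincar\'{e} device one cannot recover $\psi^{\nu}$ itself (as opposed to its gradient) by elementary means, so one must rely on the linear theory of Theorem \ref{thm:OrnsteinUhlenbeckReg} and check carefully that $U-|u|^{2}$ and the noise coefficient $V$ sit in the spaces required there, with enough uniformity to close the logarithmic estimate.
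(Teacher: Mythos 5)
Your proposal is correct and follows essentially the same route as the paper, which deduces this corollary from the preceding gradient-case corollary via the substitution $u=\nabla\psi^{\nu}$ (noting that the gradient structure is preserved, so that the Beale--Kato--Majda condition holds trivially) and obtains the moment bounds from \eqref{eqn:TEnergyInequality}, \eqref{eqn:EnergyInequality} and Fernique's theorem, the logarithm on $\Rd$ being exactly what is needed to tame the double exponential. The paper leaves all of these steps implicit, so your write-up mainly supplies details (vorticity transport, reconstruction of the potential, the Poincar\'e/mean splitting on the torus) that the authors omit rather than a different argument.
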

We can notice that the estimates \eqref{eqn:AprioriEstim}, \eqref{eqn:AprioriEstimRd} are uniform
w.r.t. $\nu$. This leads us to the following Corollary.
\begin{corollary}\label{cor:HamJacobiEquation-2}
Fix $p>d$. Assume that $\psi_0\in H^{1,p}(\mathcal{O})$ a.s.,
$U\in L^1([0,T],H^{2,p}(\mathcal{O}))$,
$V\in L^p(0,T;\gamma(H,H^{2,p}(\mathcal{O})))$. Then there
exists unique global viscosity solution $\psi\in
C(0,T;H^{1,p}(\mathcal{O}))$ of the equation
\begin{equation}\label{eqn:KDP-2}
\left\{
\begin{aligned}
d\psi &+ |\nabla\psi|^2dt=Udt+VdW_t\\
\psi(0) &= \psi_0.
\end{aligned}
\right.
\end{equation}
and for $\mathcal{O}=\mathbb{T}^d$ we have
\begin{equation}\label{eqn:AprioriEstim-2}
\mathbb{E}\supl_{s\in [0,t]}|\psi(s)|_{H^{1,p}(\mathcal{O})}^p \leq C(|\psi_0|_{H^{1,p}},|U|_{L^1([0,t],H^{2,p})},|V|_{L^2([0,t],\gamma(H,H^{2,p}))}),\;t\geq 0.
\end{equation}
Furthermore, for $\mathcal{O}=\mathbb{R}^d$ we have
\begin{equation}\label{eqn:AprioriEstim-2RD}
\mathbb{E}\log(1+\supl_{s\in [0,t]}|\psi(s)|_{H^{1,p}(\mathcal{O})}^p) \leq C(|\psi_0|_{H^{1,p}},|U|_{L^1([0,t],H^{2,p})},|V|_{L^2([0,t],\gamma(H,H^{2,p}))}),\;t\geq 0.
\end{equation}
\end{corollary}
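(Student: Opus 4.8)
The plan is to construct $\psi$ as the vanishing--viscosity limit $\nu\todown 0$ of the solutions $\psi^\nu$ of \eqref{eqn:KDP} furnished by Corollary \ref{cor:HamJacobiEquation}, exploiting that the a priori bounds \eqref{eqn:AprioriEstim} and \eqref{eqn:AprioriEstimRd} depend only on the norms $|\psi_0|_{H^{1,p}}$, $|U|_{L^1([0,t],H^{2,p})}$ and $|V|_{L^2([0,t],\gamma(H,H^{2,p}))}$ that already appear in the present hypotheses, so that they are uniform in $\nu$. Since Corollary \ref{cor:HamJacobiEquation} requires smoother data than assumed here, I would first mollify $\psi_0$, $U$, $V$ to data satisfying its hypotheses and converging to $\psi_0,U,V$ in the weaker norms above; because the estimates involve only those weaker norms, they survive the regularisation, and a diagonal argument in the two parameters reduces everything to passing to the limit for smooth data.

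The core is a pathwise reduction. Write $\psi^\nu=\phi^\nu+\zeta^\nu$, where $\zeta^\nu$ solves the linear equation $d\zeta^\nu=(\nu\triangle\zeta^\nu+U)\,dt+V\,dW_t$ with $\zeta^\nu(0)=0$. For a.e.\ $\omega$ the function $\phi^\nu$ then solves the deterministic viscous Hamilton--Jacobi equation
\[
\partial_t\phi^\nu=\nu\triangle\phi^\nu-|\nabla(\phi^\nu+\zeta^\nu)|^2,\qquad\phi^\nu(0)=\psi_0 .
\]
By Theorem \ref{thm:OrnsteinUhlenbeckReg}, as $\nu\todown 0$ the processes $\zeta^\nu$ converge a.s.\ to the inviscid Ornstein--Uhlenbeck process $\zeta^0(t)=\int_0^tU\,ds+\int_0^tV\,dW_s$ in a topology strong enough to control $\nabla\zeta^\nu$ (e.g.\ in $C([0,T];H^{2,p}(\mathcal{O}))$, so that $\nabla\zeta^\nu$ is uniformly bounded and H\"older continuous in space by the Sobolev embedding, since $p>d$). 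This turns the problem into a pathwise vanishing--viscosity statement for a Hamilton--Jacobi equation with convex Hamiltonian and a H\"older--continuous lower--order random coefficient.

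For compactness I would argue as follows. The estimates \eqref{eqn:AprioriEstim}, \eqref{eqn:AprioriEstimRd} bound $\psi^\nu$, and hence $\phi^\nu$, uniformly in $H^{1,p}(\mathcal{O})$; since $p>d$ the Morrey embedding $H^{1,p}(\mathcal{O})\hookrightarrow C^{0,1-d/p}(\mathcal{O})$ yields a uniform spatial modulus of continuity. A uniform--in--$\nu$ modulus of continuity in time I would obtain from the comparison principle for the viscous equation, comparing $\phi^\nu(t+h,\cdot)$ with $\phi^\nu(t,\cdot)$ perturbed by the oscillation of $\zeta^\nu$ on $[t,t+h]$. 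Arzel\`a--Ascoli then gives a subsequence $\phi^{\nu_k}\to\phi$ locally uniformly, and I set $\psi=\phi+\zeta^0$. By the standard stability theorem for viscosity solutions, $\phi$ is a viscosity solution of $\partial_t\phi=-|\nabla(\phi+\zeta^0)|^2$, i.e.\ $\psi$ solves \eqref{eqn:KDP-2} in the viscosity sense; uniqueness follows from the comparison principle for the convex Hamiltonian $H(x,p)=|p+\nabla\zeta^0(x)|^2$, whose $x$--dependence is H\"older continuous and hence admissible. Finally, \eqref{eqn:AprioriEstim-2} and \eqref{eqn:AprioriEstim-2RD} follow from \eqref{eqn:AprioriEstim}, \eqref{eqn:AprioriEstimRd} by lower semicontinuity of the $H^{1,p}$ norm under the weak convergence of $\nabla\psi^{\nu_k}$ in $\mathbb{L}^p$, together with Fatou's lemma in $\omega$.

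The hard part will be twofold: obtaining the uniform--in--$\nu$ time equicontinuity needed for Arzel\`a--Ascoli, and upgrading the resulting limit from $L^\infty(0,T;H^{1,p}(\mathcal{O}))$ (which is all the uniform bounds give directly) to the claimed strong continuity in $C(0,T;H^{1,p}(\mathcal{O}))$. Both difficulties stem from the white--in--time forcing, which precludes any direct bound on $\partial_t\psi^\nu$; the entire compactness argument must therefore be run pathwise after the substitution $\psi^\nu=\phi^\nu+\zeta^\nu$, and one must verify that $\zeta^\nu\to\zeta^0$ in a norm strong enough to pass to the limit in the quadratic nonlinearity while keeping the viscosity--solution stability machinery applicable.
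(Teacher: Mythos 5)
Your overall strategy --- obtaining $\psi$ as the vanishing-viscosity limit of the $\psi^\nu$ from Corollary \ref{cor:HamJacobiEquation}, using that the bounds \eqref{eqn:AprioriEstim}, \eqref{eqn:AprioriEstimRd} are uniform in $\nu$ --- is the same as the paper's, but the mechanism you use to pass to the limit is genuinely different and, as written, incomplete. You propose classical compactness: a uniform spatial modulus from $H^{1,p}\hookrightarrow C^{0,1-d/p}$, a uniform temporal modulus via comparison, Arzel\`a--Ascoli along a subsequence, then the stability theorem for locally uniform limits. The paper instead uses the half-relaxed-limits (Barles--Perthame) method, in the form of Theorem 1.1 of \cite{Barles-2006}: from the uniform sup-norm bound \eqref{eqn:AprioriEstim-3} alone, $\psi^*=\limsup^*\psi^\nu$ is an upper semicontinuous subsolution and $\psi_*=\liminf_*\psi^\nu$ a lower semicontinuous supersolution of \eqref{eqn:KDP-2}, and the comparison principle of \cite{CrIshLi-1987} gives $\psi^*\le\psi_*$, hence $\psi^*=\psi_*=\psi$ and locally uniform convergence of the whole family, with no subsequence extraction. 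The point of that method is precisely to dispense with what you yourself flag as ``the hard part'': no equicontinuity in time and no control of $\partial_t\psi^\nu$ is ever needed, and the specific stability result cited (stability under weak convergence in time) is what accommodates the white-in-time forcing. As a consequence your route leaves two substantive steps unproved: (i) the $\nu$-uniform temporal modulus of continuity of $\phi^\nu$, which your comparison-based sketch does not actually establish and which is delicate precisely because of the rough time dependence of the coefficients; and (ii) the convergence $\zeta^\nu\to\zeta^0$ of the Ornstein--Uhlenbeck processes in $C([0,T];H^{2,p}(\mathcal{O}))$, which does not follow from Theorem \ref{thm:OrnsteinUhlenbeckReg} as stated and would require a separate argument about stochastic convolutions. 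If you replace your compactness step by the half-relaxed-limits argument, the remainder of your outline (uniqueness via comparison for the convex Hamiltonian, and the estimates \eqref{eqn:AprioriEstim-2}, \eqref{eqn:AprioriEstim-2RD} by passing the uniform bounds to the limit) coincides with the paper's proof.
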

\begin{remark}
The Corollaries \ref{cor:HamJacobiEquation} and \ref{cor:HamJacobiEquation-2} are different from results of \cite{DirrSouganidis-2005} because they consider viscosity solutions in the space of continuous functions while we consider solutions in $\mathbb{H}^{1,p}(\mathcal{O})$, $p>d$.
\end{remark}
\begin{proof}
Let $\{\psi^{\nu}\}_{\nu>0}\in C(0,T;\mathbb{H}^{1,p}(\mathcal{O}))\cap
C^{1,2}((0,T]\times \mathcal{O})$ be sequence of solutions of the
equation \eqref{eqn:KDP}. Since
$\mathbb{H}^{1,p}(\mathcal{O})\subset C(\mathcal{O},\mathbb R^d),p>d$ and estimate
\eqref{eqn:TEnergyInequality} (corr. estimate \eqref{eqn:EnergyInequality} if $\mathcal{O}=\Rd$) we have uniform w.r.t. $\nu$ estimate $\mathbb{P}$-a.s.
\begin{eqnarray}
|\psi^{\nu}|_{C(0,T;C(\mathcal{O},\mathbb R^d))}^p\leq
K(T,\psi_0,h,d),T>0,p>d.\label{eqn:AprioriEstim-3}
\end{eqnarray}
Then according to Theorem 1.1, p. 175 in \cite{Barles-2006} we
have that there exist uniformly bounded upper continuous
subsolution $\psi^*=\limsupl_{\nu\to 0}^*\psi^{\nu}$ $\mathbb{P}$-a.s. and uniformly
bounded lower continuous supersolution $\psi_*=\liminfl_{\nu\to
0}^*\psi^{\nu}$ $\mathbb{P}$-a.s. of equation \eqref{eqn:KDP-2}.
Therefore, by comparison principle for viscosity solutions of
Hamilton-Jacobi equations (see Theorem 2, p. 585 and Remark 3, p. 593 of \cite{CrIshLi-1987}), $\psi^*\leq\psi_*$ and
$\psi=\psi^*=\psi_*$. Thus, $\psi^{\nu}$ locally uniformly
converges to unique viscosity solution $\psi$ of equation
\eqref{eqn:KDP-2} $\mathbb{P}$-a.s. Estimate
\eqref{eqn:AprioriEstim} implies that $\psi$ satisfies
\eqref{eqn:AprioriEstim-2}.

\end{proof}

\end{document}